\DeclareMathOperator*{\argmax}{arg\,max}
\DeclareMathOperator*{\argmin}{arg\,min}
\newcommand\txtblue[1]{{\color{black}#1}}
\newtheorem{lemma}{Lemma}
\newtheorem{proposition}{Proposition}
\newtheorem{corollary}{Corollary}
\title{Asymptotically\txtblue{-}Optimal Incentive-Based \txtblue{En-route} Caching Scheme}
\author{\IEEEauthorblockN{Ammar Gharaibeh$^{\dag}$, Abdallah Khreishah$^{\dag}$, Issa Khalil$^{\star}$, Jie Wu$^{\diamond}$
}
\IEEEauthorblockA{$^{\dag}$New Jersey Institute of Technology, $^{\star}$Qatar Computing Research Institute, $^{\diamond}$Temple University\\
amg54@njit.edu, abdallah@njit.edu, ikhalil@qf.org.qa, jiewu@temple.edu}
}
\begin{document}
\maketitle

\begin{abstract}
Content caching at intermediate nodes is a very effective way to optimize the operations of Computer networks, so that future requests can be served without going back to the origin of the content. Several caching techniques have been proposed since the emergence of the concept, including techniques that require major changes to the Internet architecture such as Content Centric Networking. Few of these techniques consider providing caching incentives for the nodes or quality of service guarantees for content owners. In this work, we present a low complexity, distributed, and online algorithm for making caching decisions based on content popularity, while taking into account the aforementioned issues. Our algorithm performs en-route caching. Therefore, it can be integrated with the current TCP/IP model. In order to measure the performance of any online caching algorithm, we define the competitive ratio as the ratio of the performance of the online algorithm in terms of traffic savings to the performance of the optimal offline algorithm that has a complete knowledge of the future. We show that under our settings, no online algorithm can achieve a better competitive ratio than $\Omega(\log n)$, where $n$ is the number of nodes in the network. Furthermore, we show that under realistic scenarios, our algorithm has an asymptotically optimal competitive ratio in terms of the number of nodes in the network. We also study an extension to the basic algorithm and show its effectiveness through extensive simulations.
\end{abstract}

\begin{keywords}
\txtblue{En-route caching}, caching incentive, competitive ratio, asymptotic optimality, quality of service.
\end{keywords}

\section{Introduction}\label{sec:Introduction}
Recently, content retrieval has dominated the Internet traffic. Services like Video on Demand accounts for 53\% of the total Internet traffic, and it is expected to grow even further to 69\% by the end of 2018 \cite{cisco2012forecast}. Content Delivery Network (CDN) uses content replication schemes at \txtblue{dedicated servers} to bring the contents closer to the requesting customers. This has the effect of offloading the traffic from the origin servers, reducing content delivery time, and achieving better performance, scalability, and energy efficiency \cite{vakali2003content,pathan2007taxonomy}. Akamai, for example, is one of the largest CDNs deployed, delivering around 30\% of web traffic through globally-distributed platforms \cite{nygren2010akamai}. \txtblue{The problem with CDN is the necessity of \emph{dedicated servers} and that content replication is done offline.}

\txtblue{Several techniques have emerged to overcome the limitation of caching at dedicated servers. For example, }Content Centric Networking (CCN) \cite{jacobson2009networking} uses the content name instead of the IP address of the source to locate the content. This allows more flexible caching at intermediate nodes. In order to implement CCN, major changes in the TCP/IP protocol needs to be performed. When a client requests certain content, the client sends an \emph{Interest Packet} to all its neighbors, which in turn send the packet to all of their neighbors except the one where the packet came from. The process continues until a node caching the desired content is found, which in turn replies with a \emph{Data Packet} containing the desired content.

Clearly, caching a content will reduce the traffic on the upstream path, if the same content is being requested another time by a different client. Given the limited cache capacity, the questions to answer become \lq What are the factors that affect achieving the maximum traffic savings?\rq and \lq Which contents are to be cached in order to achieve the same objective?\rq

Several studies try to answer the above questions. The work in \cite{psaras2011modelling} investigates the dependence of the caching benefit on content popularity, nodes' caching capacities, and the distance between nodes and the origin server. The performance of CCN has been evaluated in \cite{rossi2011caching} under different topologies, by varying routing strategies, caching decisions, and cache replacement policies. The results also show the dependence of CCN performance on content popularity.

Several techniques for content caching have been proposed in the literature. The work in \cite{jacobson2009networking} presents \emph{Always Cache}, where a node caches every new piece of content under the constraint of cache capacity. The authors in \cite{guan2013push} provide a push-pull model to optimize the joint latency-traffic problem by deciding which contents to push (cache) on intermediate nodes, and which contents to pull (retrieve) from the origin server. Most Popular Caching caches a content at neighboring nodes when the number of requests exceeds some threshold \cite{bernardini2013mpc}. \emph{ProbCache} aims to reduce the cache redundancy by caching contents at nodes that are close to the destination \cite{psaras2012probabilistic}. A cooperative approach in \cite{fiore2009cache} leads to a node's caching decision that depends on its estimate of what neighboring nodes have in their cache. A collaborative caching mechanism in \cite{dai2012collaborative} maximizes cache cooperation through dynamic request routing. In \cite{laoutaris2007cache}, nodes try to grasp an idea of other nodes' caching policies through requests coming from those nodes.

Few works targeted the caching decision problem from the point of view of optimality, or providing incentives for nodes to cache. \txtblue{The work in \cite{jiang2003optimal} presents an offline solution through dynamic programming for content placement for en-route caching}. Authors in \cite{llorca2013dynamic} characterize the optimal content placement strategy under offline settings, in which all future requests are known to all nodes in the network. The work of \cite{rosensweig2009breadcrumbs} presents an online solution but with no efficiency or optimality proofs. Other works such as \cite{rajahalme2008incentive} and \cite{pham2013pricing} consider incentives for nodes to cache. However, they provide high level solutions that do not scale well with large systems. The authors in \cite{pham2013pricing} consider a special case with only 3 ISPs.

This paper provides a provably-optimal online solution for the first time under a setting that brings incentives for the nodes to cache. In order to provide incentives for the nodes to cache, nodes have to charge content providers for caching their contents. Adopting such charging policies forces the caching node to provide quality of service guarantees for content providers by not replacing their contents in the future, if the node decides to cache their contents. Since the number of contents far exceeds the nodes' cache capacities, and assuming that the charging price for every piece of content is the same, then the node has no preference in caching one content over the other, forcing the node to cooperate and apply our policy that achieves \txtblue{asymptotic} optimality.

Specifically, we make the following contributions:

{\bf(1)} We design an online, low complexity, and distributed caching decision algorithm that provides incentives for the nodes to cache, and quality of service guarantees for content providers.
{\bf(2)} Our algorithm performs en-route caching and thus can be implemented without radical changes to the TCP/IP protocol stack.
{\bf(3)} Under some realistic network settings, We show that our algorithm is asymptotically (in terms of the number of nodes in the network) optimal (in terms of traffic savings). 
{\bf(4)} Through extensive simulations, we show that our algorithm outperforms existing caching schemes. We also show the effeciency of an extension of our algorithm with respect to the existing caching schemes.

The rest of the paper is organized as follows: Section \ref{sec:Settings} states the definitions and settings of our algorithm. Section \ref{sec:Algorithm} describes the algorithm and practical issues. Optimality analysis of the algorithm is presented in Section \ref{sec:Performance}. Section \ref{sec:Extensions} describes the extensions of our algorithm. Section \ref{sec:Simulation} provides simulation results. We conclude the paper in Section \ref{sec:Conclusion}.

\section{Settings and Definitions}\label{sec:Settings}
In this Section, we provide the settings under which our algorithm takes place, followed by some necessary definitions.

\subsection{Settings}
A network is represented by a graph $G (V,E)$, where each node $i\in V$ has a caching capacity of $D_{i}$. If the node does not have caching capability, its caching capacity is set to 0. Weights can be assigned to each link $e \in E$, but we consider all links to have the same weight. The input consists of a sequence of contents $\beta_{1}, \beta_{2},...,\beta_{m}$, the $j$-th of which is represented by $\beta_{j} = (S_{j}, r_{j}, T_{j}(\tau))$, where $S_{j}$ is the source for content $\beta_{j}$, $r_{j}$ is the size of $\beta_{j}$, and $T_{j}(\tau)$ is the effective caching duration in which more requests are expected for $\beta_{j}$ when a request appears at time slot $\tau$. For simplicity, we assume a slotted time system and that $T_{j}(\tau)$ is an integer multiple of slots.

For each content, we define the following values:

{\bf(1)} $b_{i}(j)$: Number of hops on the path from node $i$ to $S_{j}$ for $\beta_{j}$.

{\bf(2)} $W_{i}(\tau,j)$: The expected number of requests for $\beta_{j}$ to be served from the cache at node $i$ at time slot $\tau$, if all of the caching nodes cache $\beta_{j}$.

{\bf(3)} $t_{0}(i,j)$: The time when a request for $\beta_{j}$ appears at node $i$.

{\bf(4)} $\mathcal{E}_{i}(\tau,j)$: The total expected number of requests for $\beta_{j}$ to be served from the cache at node $i$ per time slot $\tau$. We assume that $\mathcal{E}_{i}(\tau,j)$ is fixed $\forall \tau \in \{t_{0},\ldots,t_{0}+T_{j}(t_{0})\}$.

{\bf(5)} $\tau_{0}(i,j)$: The time when $\beta_{j}$ is cached at node $i$. For simplicity, we denote this value hereafter by $\tau_{0}$ since the values of $(i,j)$ can be inferred from the context.

{\bf(6)} $d_{i}(\tau,j)$: Number of hops from node $i$ to the first node caching $\beta_{j}$ along the path to $S_{j}$ at time $\tau$. We assume that if node $i$ caches $\beta_{j}$ at time $\tau_{0}$, then $d_{i}(\tau,j) = d_{i}(\tau_{0},j), \forall \tau \in \{\tau_{0},\ldots,\tau_{0} + T_{j}(\tau_{0})\}$.

Figure \ref{fig:Simple} shows a simple network to illustrate the aforementioned definitions. In this example, we have two contents $\beta_{1}$ and $\beta_{2}$, originally stored on $v_{1}$ and $v_{2}$, respectively. The triangles in the figure represent the subnetworks containing the set of non-caching nodes connected to the caching node. The values of $W_{i}(\tau,j)$ represent the expected number of requests for $\beta_{j}$ coming from the set of non-caching nodes in the subnetwork connected to node $i$.

Before any requests for $\beta_{j}$ appears at any node, each node $i$ will send its $W_{i}(\tau,j)$ to all nodes on the path from node $i$ to the source of $\beta_{j}$, $S_{j}$. This process will lead to the calculation of the initial values of $\mathcal{E}_{i}(\tau,j)$.

For example, in Figure \ref{fig:Simple}, before any request for $\beta_{1}$ appears at any node, $\mathcal{E}_{3}(\tau,1) = W_{3}(\tau,1) + W_{4}(\tau,1)$, to a total value of 6. This is because, starting from the initial configuration while investigating the caching of content $\beta_{1}$ on node $v_{3}$, all the requests for $\beta_{1}$ coming from the subnetworks connected to $v_{3}$ and $v_{4}$ will be served from the cache of $v_3$, if we decide to cache $\beta_{1}$ on $v_{3}$. Similarly, $\mathcal{E}_{2}(\tau,1) = 9$. Later on, if $v_{4}$ decides to cache $\beta_{1}$, then $W_{4}(\tau,1)$ will be subtracted from all nodes along the path to $S_{1}$, until the first node caching $\beta_{1}$ is reached. This is because none of these nodes will serve the requests for $\beta_{1}$ coming from the subnetwork connected to $v_{4}$ after this point. In Sections \ref{sec:Algorithm} and \ref{subsec:expec}, we provide details for the dynamic calculation and initialization of $\mathcal{E}_{i}(\tau,j)$, respectively.

\begin{figure}
\centering
\begin{minipage}{.45\linewidth}
\includegraphics[width=\linewidth]{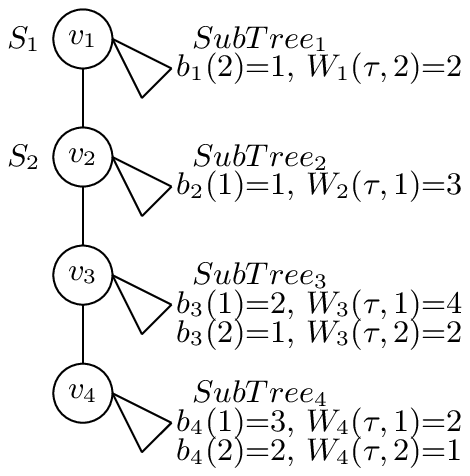}
\captionof{figure}{Simple Caching Network.}
\label{fig:Simple}
\end{minipage}
\hspace{.05\linewidth}
\begin{minipage}{.45\linewidth}
\centering
\includegraphics[width=\linewidth]{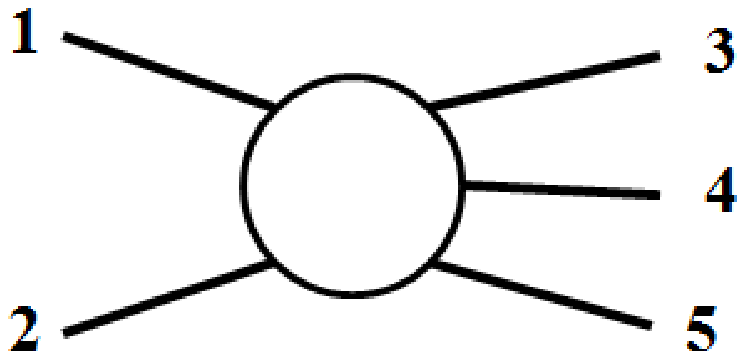}
\captionof{figure}{A single node in CCN.}
\label{fig:hier}
\end{minipage}
\end{figure}

We define the total traffic savings of caching in the time interval [0,$t$] as:
\begin{equation}
\sum_{\tau = 0}^{t}\sum_{i = 1}^{n}\sum_{j = 1}^{m}\mathcal{E}_{i}(\tau_{0},j)d_{i}(\tau_{0},j)I(a_{i}(\tau,j)),
\label{eq:savings}
\end{equation}
where $I(.)$ is the indicator function and $a_{i}(\tau,j)$ is the event that $\beta_{j}$ exists at node $i$ at time $\tau$. For example, referring to Figure \ref{fig:Simple}, caching $\beta_{1}$ on $v_{3}$ alone for a single time slot will yield a saving of $\mathcal{E}_{3}(\tau,1)\times d_{3}(\tau,1) = (4 + 2)\times2 = 12$.

We define the relative load on a caching node $i$ at time $\tau$ when $\beta_{j}$ arrives as
\begin{displaymath}
\lambda_{i}(\tau,j) = \sum_{\substack{k: k < j\\
k \in Cache_{i}(\tau)}} \frac{r_{k}}{D_{i}},
\end{displaymath}
where $k<j$ refers to the indices of all $\beta_{k}$ that are in the cache of node $i$ at the time when considering $\beta_{j}$ to be cached at node $i$. We use $k \in Cache_{i}(\tau)$ to represent the existence of $\beta_{k}$ in the cache of node $i$ at time $\tau$.

As we mentioned in Section \ref{sec:Introduction}, charging content providers for caching their contents will provide the nodes with the necessary incentives to cache. In return, the nodes have to guarantee quality of service for content providers by keeping their content cached for the required time period. We assume that content providers are charged the same to prevent the node from prefering contents with a higher prices. To this end, we consider \emph{non-preemptive} caching to represent our system model, \emph{i.e.}, once $\beta_{j}$ is cached at node $i$, it will stay cached $\forall \tau \in \{\tau_{0},\ldots,\tau_{0}+T_{j}(\tau_{0})\}$ time slots. We elaborate more on $T_{j}(\tau)$ in Section \ref{subsec:Timing}.

\subsection{Definitions}
\emph{Offline vs. Online Algorithms}: The main difference between the offline and the online algorithms is that the offline algorithm has a complete knowledge of the future. In our work, offline means that the algorithm knows \emph{when}, \emph{where}, and \emph{how many times} a content will be requested. This knowledge leads to the optimal content distribution strategy that maximizes the performance in terms of traffic savings. On the other hand, online algorithms do not possess such knowledge. Online algorithms have to make a caching decision for a content based on the available information at the time of the content arrival. Due to this difference, the offline algorithm's performance is better than that of the online algorithm.

Under our settings, we assume that the node does not know when a request for a content will come. However, once a request for a content arrives at a caching node, the node will know the content's size, the effective caching duration time, and the expected number of requests to be served from the cache of the caching node. Furthermore, all other caching nodes are informed about the arrival time of the request. We elaborate more on this issue in Section \ref{subsec:Timing}. For example, referring back to Figure \ref{fig:Simple}, node $v_{3}$ does not know when a request for $\beta_{1}$ will come. Only when a request for $\beta_{1}$ arrives at $v_{3}$ \txtblue{at time $t_{0}$}, does $v_{3}$ know $r_{1}, T_{1}(t_{0}), \mathcal{E}_{3}(\tau,1)$, in addition to its own relative load, $\lambda_{3}(\tau,1), \txtblue{\forall \tau \in \{t_{0},\ldots,t_{0}+T_{1}(t_{0})\}}$. However, node $v_{3}$ does not know when the next request for the same content will come.

To measure the performance in terms of \emph{traffic savings}, as defined in \eqref{eq:savings}, of the online algorithm against the offline algorithm, we use the concept of \emph{Competitive Ratio}. Here, traffic savings refer to, but not limited, to the total number of hops saved using \txtblue{en-route caching}, compared to the traditional no-\txtblue{caching} case in which the request for a content is served by the content's source. The traffic savings can be based on other metrics like the actual distance or the energy consumption. Other works have used the concept of competitive ratio, but for different problems such as energy efficiency \cite{albers2007energy} or online routing \cite{jaillet2008generalized}. Competitive ratio is defined as the performance achieved by the offline algorithm to the performance achieved by the online algorithm, \emph{i.e.}, if we denote the offline performance as $P_{off}$ and the online performance as $P_{on}$, the competitive ratio is:
\begin{displaymath}
\sup_{t} \sup_{\substack{all\ input\\
sequences\ in\ [0,t]}} \frac{P_{off}}{P_{on}}.
\end{displaymath}
As the ratio gets closer to 1, the online performance gets closer to the offline performance. In other words, the smaller the competitive ratio, the better the online algorithm's performance.

We motivate the design of our online algorithm by the following reasoning; knowing the contents' popularities alone does not guarantee an optimal solution. The order in which the contents arrive makes a big difference.

Referring to Figure \ref{fig:Simple}, consider the existence of two contents named $X$ and $Y$, originally located at $v_{1}$. Assume that $W_{3}(\tau,X)=W_{4}(\tau,X)=1$, $W_{3}(\tau,Y)=1$, and $W_{4}(\tau,Y)=10$. Assume that all nodes have enough residual cache capacity for one content except node $v_{3}$, which is full and cannot cache any content. Furthermore, assume that $X$ and $Y$ will be both requested twice by $v_{4}$ at different time slots. Consider the following two scenarios:

{\bf(En-Route Caching)}: If the first request for content $X$, followed by the first request for content $Y$, arrives at $v_{4}$, then $v_{4}$ will cache the first content $X$ and $v_{2}$ will cache content $Y$, achieving a traffic saving at $v_{4}$ for the next pair of requests of $(1\times3) + (10\times1) = 13$. Later on, if requests for $X$ and $Y$ appear at $v_{3}$, then $v_{3}$ will get content $X$ from $v_{1}$ and content $Y$ from $v_{2}$, gaining an additional savings of $(1\times0) + (1\times1) = 1$.

On the other hand, if a request for $Y$ is followed by a request for $X$ at $v_{4}$, then $v_{4}$ will cache the first content $Y$ and $v_{2}$ will cache content $X$, achieving a traffic saving at $v_{4}$ for the next pair of requests of $(10\times3) + (1\times1) = 31$. Later on, if requests for $X$ and $Y$ appear at $v_{3}$, then $v_{3}$ will get content $X$ from $v_{2}$ and content $Y$ from $v_{1}$, gaining an additional savings of $(1\times1) + (1\times0) = 1$. So the online algorithm will achieve an average traffic savings of $23$.

The offline algorithm knows in advance that content $Y$ will be requested and can reject the caching of content $X$ at $v_{4}$ and cache it at $v_{2}$ to achieve a traffic saving of $(10\times3) + (1\times1) + (1\times1) + (1\times0) = 32$.

{\bf(Routing to the Closest Caching Node)}: If the first request for content $X$, followed by the first request for content $Y$, arrives at $v_{4}$, then $v_{4}$ will cache the first content $X$, and $v_{2}$ will cache content $Y$, achieving a traffic saving at $v_{4}$ for the next pair of requests of $(1\times3) + (10\times1) = 13$. Later on, if requests for $X$ and $Y$ appear at $v_{3}$, then $v_{3}$ will get content $X$ from $v_{4}$ and content $Y$ from $v_{2}$, gaining an additional savings of $(1\times1) + (1\times1) = 2$.

On the other hand, if a request for $Y$ is followed by a request for $X$ at $v_{4}$, then $v_{4}$ will cache the first content $Y$ and $v_{2}$ will cache content $X$, achieving a traffic saving at $v_{4}$ for the next pair of requests of $(10\times3) + (1\times1) = 31$. Later on, if requests for $X$ and $Y$ appear at $v_{3}$, then $v_{3}$ will get content $X$ from $v_{2}$ and content $Y$ from $v_{4}$, gaining an additional savings of $(1\times1) + (1\times1) = 2$. Because of this, the online algorithm will achieve an average traffic savings of $24$.

The offline algorithm knows in advance that content $Y$ will be requested and can reject the caching of content $X$ at $v_{4}$, and will cache it at $v_{2}$ to achieve a traffic saving of $(10\times3) + (1\times1) + (1\times1) + (1\times1) = 33$.

The above examples show that the online algorithm cannot guarantee an optimal solution. In fact, we show that there is an upper bound on the savings achieved by the online algorithm when compared to the offline algorithm, and we develop an online algorithm that achieves that bound under realistic settings.

\section{Algorithm}\label{sec:Algorithm}
In this Section, we present the Cost-Reward Caching (CRC) algorithm that achieves the optimal competitive ratio, along with some practical issues. We introduce the proof of optimality in the next Section.
\subsection{CRC Algorithm}
CRC takes advantage of \txtblue{en-route caching}, \emph{i.e.}, a request for a content is forwarded \txtblue{along the path to the content's source}, up to the first node that has the content in its cache. The content then will follow the same path back to the requester.

In CCN, when an interest packet for a new content arrives at a node on a certain interface, the node will send the interest packet using all other interfaces. For example, Figure \ref{fig:hier} shows a single node in CCN, where the numbers represent the interfaces of the node. When a request for $\beta_{j}$ arrives at the node through interface number 2, and a match is not found in neither the cache nor the Pending Interest Table (PIT), the node will send the request on all interfaces except interface number 2. Our algorithm uses \txtblue{en-route caching}, so the new interest packet is only forwarded on the single interface along the path to the content's source.

When a request for a content $\beta_{j}$ appears at a node $i$ \txtblue{at time $t_{0}$, node $i$ sends a small control message up to the first node caching $\beta_{j}$ along the path to the source of the content. Let $w$ be that first node, then node $w$ replies with a message containing $r_j$ and the ID of node $w$. Every node $u$ in the path from node $w$ to node $i$ stores a copy of the message, computes $d_{u}(t_0,j)$, and forwards the message to the next node along the path to node $i$}. When Node $i$ recieves the message, it makes a caching decision according to Algorithm \ref{alg:alg2}. If node $i$ decides to cache $\beta_{j}$, it initializes a header field in the request packet to the value of $\mathcal{E}_{i}(\tau,j)$. If node $i$ decides not to cache, it initializes the header field to 0.

The request packet is then forwarded to the parent node $z$. The parent first subtracts the value stored in the header field from its own value of $\mathcal{E}_{z}(\tau,j)$. Based on the new value of $\mathcal{E}_{z}(\tau,j)$, if node $z$ decides to cache $\beta_{j}$, it adds its $\mathcal{E}_{z}(\tau,j)$ to the value in the header field. Otherwise, node $z$ adds 0. The request packet is then forwarded to node $z$'s parent, and the whole process is repeated until the request reaches the first node that has the content in its cache. The content then will follow the same path back to the requester, and every node in the path that decided to cache the content will store a copy in its cache. We describe the operation of our algorithm in Algorithm \ref{alg:alg1}. 

\begin{algorithm}
\caption{En-Route Caching}
\label{alg:alg1}
\begin{algorithmic}
\STATE{A request for $\beta_{j}$ appears at node $i$ at time $t_{0}$.}
\STATE{$header = 0$}
\IF{$\beta_{j} \in Cache_{i}(t_{0})$}
\STATE{Reply back with $\beta_{j}$}
\ELSE
\STATE{Send a control message to retrieve $r_{j}$, $d_{i}(t_{0},j)$}
\STATE{$w \leftarrow$ first node on the path to $S_{j}$, where $\beta_{j} \in Cache_{w}(t_{0})$}
\STATE{Node $w$ replies with $r_{j}$ and $ID$}
\STATE{$\forall u \in Path(w,i)$, store $r_{j}, d_{u}(t_{0},j)$}
\FOR{$u_{k} \in Path(i,w), k = 1:Length(Path(i,w))$}
\STATE{$\mathcal{E}_{u_{k}}(t_{0},j) = \mathcal{E}_{u_{k}}(t_{0},j) - header$}
\STATE{Run \emph{Cost-Reward Caching} algorithm}
\IF {Caching Decision = TRUE}
\STATE{$header = header + \mathcal{E}_{u_{k}}(t_{0},j)$}
\ENDIF
\ENDFOR
\ENDIF
\end{algorithmic}
\end{algorithm}

For example, Figure \ref{fig:Prac} shows a simple network where a content $\beta_{1}$ is originally stored at $S_{1}$. We removed the triangles representing the set of non-caching nodes for the sake of clarity.  If a request for $\beta_{1}$ appears at $v_{0}$, \txtblue{node $v_{0}$ will send a control message up to the first node caching $\beta_{1}$, which is $S_{1}$, and retrieves the values of $r_{1}$ and $d_{0}(t_{0},1) = 1$. Based on these values, if} $v_{0}$ decides to cache $\beta_{1}$, \txtblue{it will send the request for $\beta_{1}$ to its parent, which is $S_{1}$, with the header field initialized to $\mathcal{E}_{0}(t_{0},1) = 14$. Node $S_{1}$ will simply reply with a data packet containing $\beta_{1}$, and $v_{0}$ will cache $\beta_{1}$}. \txtblue{Later on,} if another request for $\beta_{1}$ appears at $v_{5}$ \txtblue{while $\beta_{1}$ is still cached at $v_{0}$, node $v_{5}$ will send a control message up to the first node caching $\beta_{1}$, which is $v_{0}$. Node $v_{0}$ sends a message containing the values of $r_{1}$ and its ID to node $v_2$. Node $v_2$ will store the value of $r_1$, sets $d_{2}(t_{0},j) = 1$, and forwards the message to $v_5$. Node $v_5$ in turn will store the value of $r_1$ and set $d_{5}(t_{0},j) = 2$. Based on these values, if} $v_{5}$ decides to cache $\beta_{1}$ \txtblue{it will send the request for $\beta_{1}$ to its parent, which is $v_{2}$, with a header field initialized to $\mathcal{E}_{5}(\tau,1) = 2$. When the request reaches $v_{2}$, it will first subtract the value in the header field from its own $\mathcal{E}_{2}(\tau,1)$, so the new value of $\mathcal{E}_{2}(\tau,1)$ is $\mathcal{E}_{2}(\tau,1) = \mathcal{E}_{2}(\tau,1) - header = 4 - 2 = 2$. The reason that node $v_{2}$ has to subtract the header field from its own $\mathcal{E}_{2}(\tau,1)$ is because the requests for $\beta_{1}$ coming from the subnetwork connected to node $v_{5}$ will not be served from the cache of node $v_{2}$ since $v_{5}$ decided to cache $\beta_{1}$. Based on these values, if $v_{2}$ decides \emph{not} to cache $\beta_{1}$, it will add 0 to the header field and forward the request to its parent $v_{0}$. Node $v_{0}$ will simply reply with a data packet containing $\beta_{1}$, and only $v_{5}$ will cache $\beta_{1}$.}

\begin{figure}[h]
\centering
\includegraphics[scale = 0.8]{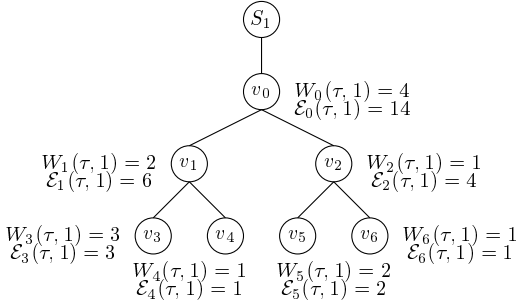}
\caption{Simple Caching Netwrok 2.}
\label{fig:Prac}
\end{figure}

The core idea of the Cost-Reward Caching algorithm is to assign an exponential cost function for each node in terms of the node's relative load. If the cost of caching a content is less than the traffic savings achieved by caching the content, the algorithm decides to cache. The choice of an exponential cost function guarantees that the node's capacity constraints are not violated. We show that in the next Section.

We define the cost of caching at a node $i$ at time $\tau$ as:
\begin{displaymath}
C_{i}(\tau,j) = D_{i}(\mu^{\lambda_{i}(\tau,j)} - 1),
\end{displaymath}
where $\mu$ is a constant defined in Section \ref{sec:Performance}. The algorithm for Cost-Reward Caching is presented in Algorithm~\ref{alg:alg2}.

\begin{algorithm}
\caption{Cost-Reward Caching (CRC)}
\label{alg:alg2}
\begin{algorithmic}
\STATE{New request for $\beta_{j}$ arriving at node $i$ at time $t_{0}$}
\STATE{$\forall \tau \in \{t_{0},\ldots,t_{0}+T_{j}(t_{0})\}$, Compute $\lambda_{i}(\tau,j), C_{i}(\tau,j)$}
\STATE{}
\IF{$\sum_{\tau = t_{0}}^{t_{0} + T_{j}(t_{0})}\mathcal{E}_{i}(\tau,j)d_{i}(t_{0},j) \geq \sum_{\tau = t_{0}}^{t_{0} + T_{j}(t_{0})}\frac{r_{j}}{D_{i}}C_{i}(\tau,j)$}
\STATE{}
\STATE{Cache $\beta_{j}$ on node $i$}
\STATE{$\tau_{0}(i,j) = t_{0}(i,j)$}
\STATE{$\forall \tau \in \{t_{0},\ldots,t_{0}+T_{j}(t_{0})\}, \lambda_{i}(\tau,j+1) = \lambda_{i}(\tau,j) + \frac{r_{j}}{D_{i}}$}
\ELSE
\STATE{Do not cache}
\ENDIF
\end{algorithmic}
\end{algorithm}
In the algorithm, when new content that is \emph{not currently cached by node $i$} arrives at time $t_{0}$, node $i$ computes the relative load ($\lambda_{i}(\tau,j)$) and the cost ($C_{i}(\tau,j)$) for every $\tau \in \{t_{0},\ldots,t_{0}+T_{j}(\tau)\}$. This is because a currently cached content may be flushed before $t_{0}+T_{j}(t_{0})$, thus the relative load and the cost should be adjusted for each time slot thereafter.

For example, Figure \ref{fig:figure1} shows the relative load at a node for the next 10 time slots starting from $t_{0}$, which is the arrival time of a new content $\beta_{4}$. The node has three cached contents, $\beta_{1}$, $\beta_{2}$, and $\beta_{3}$ that are going to be flushed at times $\tau_{1} = t_{0}+3$, $\tau_{2} = t_{0}+9$, and $\tau_{3} = t_{0}+7$, respectively. When a $\beta_{4}$ arrives at this node at $\tau = t_{0}$ with $T_{4}(t_{0}) = 10$, the cost calculation should include three cached contents for 3 time slots, two cached contents for 4 time slots, one cached content for 2 time slots, and 0 cached content for 1 time slot. If the total savings for caching $\beta_{4}$ is greater than the aggregated cost, then $\beta_{4}$ will be cached on node $i$, and the relative load is updated to include the effect of $\beta_{4}$.

\begin{figure}[h]
\centering
\includegraphics[scale = 0.5]{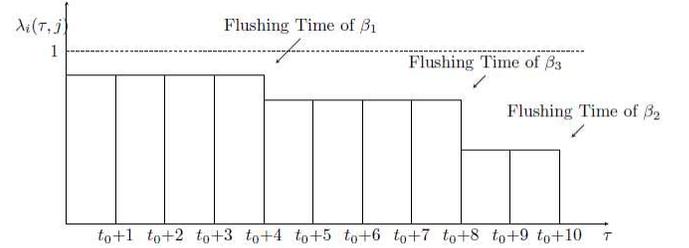}
\caption{Relative Load Calculation Example. The figure shows the state of the cache in one node when it considers a new content $\beta_{4}$ for caching at time $t_{0}$ and $T_{4}(t_{0}) = 10$. We have three contents, $\beta_{1}$, $\beta_{2}$, and $\beta_{3}$, that are to be flushed at times $\tau_{1} = t_{0}+3$, $\tau_{2} = t_{0}+9$, and $\tau_{3} = t_{0}+7$, respectively.}
\label{fig:figure1}
\end{figure}

\subsection{Practical Issues}\label{subsec:Practical}
So far, we developed a fully distributed algorithm that achieves asymptotic optimality in terms of traffic savings under some realistic assumptions. Before providing the optimality proof, we discuss in this section the practical issues that make the algorithm easy to implement. The major issues in our algorithm include providing incentives for the caching nodes and QoS guarantees for the content providers, the adoption of en-route caching, calculating the popularity expectation of each content, and updating the effective caching duration.

\subsubsection{Providing Incentives and QoS Guarantees}\label{subsec:qos}
In this work, the QoS measure is to guarantee the existence of the content in the cache for a certain period of time, so the content will be delivered quickly. In other words, once a caching node decides to cache a certain content, the content will not be replaced during the effective caching time of the content. Providing such a guarantee along with adopting an equal pay charging policy for all contents will provide the caching nodes with the necessary incentive to cache. Figure \ref{fig:qos} shows the interaction between the ISP and the content provider.

\begin{figure}[h]
\centering
\includegraphics[scale = 0.4]{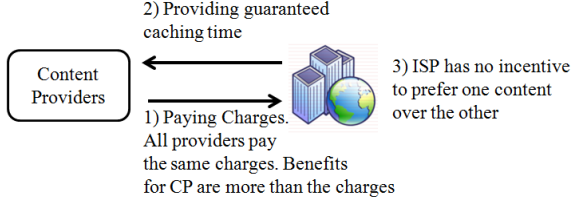}
\caption{Interaction between ISP and Content Provider.}
\label{fig:qos}
\end{figure}

We assume that the caching nodes should adopt charging policies, where every content provider is charged the same. This will prevent the caching node from preferring one content over the other. Moreover, such charging policies will enforce the caching nodes to cooperate and apply our CRC algorithm.

\subsubsection{En-Route Caching}
In en-route caching, a request for $\beta_{j}$ will be sent to the parent along the traditional path to the content's source, until the request reaches the first node caching the content or the content's source. The adoption of this en-route caching reduces the amount of broadcasted \emph{Interest} packets as opposed to the currently deployed schemes in CCN, where the interest packets are broadcasted to all neighbors. Moreover, using en-route caching prevents the reception of multiple copies of the requested content as opposed to CCN. Furthermore, our algorithm can be easily implemented in the current Internet architecture.

\subsubsection{Calculating the Initial Content Expectation Values}\label{subsec:expec}
For each content, we start by building a caching tree rooted at the source of the content. The caching tree is the union of the traditional paths from the source of the content to all other nodes. We calculate the initial expectation value at a caching node for a certain content, when only node $S_j$ holds the $j$-th content, based on the content's popularity and the number of end nodes in the subnetwork connected to that node. For example, in Figure \ref{fig:Simple}, $W_{3}(\tau,j)$ at node $v_{3}$ for content $\beta_{j}$ is proportional to the content's popularity and the number of end nodes in the subnetwork connected to node $v_{3}$.

Algorithm \ref{alg:alg3} shows how to calculate $\mathcal{E}_{i}(\tau,j)$ for each content at each caching node before the appearance of any request at any node. The expectations are calculated in a distributed way, where each node only needs to know the expectation values of its children in the caching tree. In the simulation, we investigate the effect of having error margins in the expectation calculation.

\begin{algorithm}
\caption{Initial Content Popularity Expectation Calculation}
\label{alg:alg3}
\begin{algorithmic}
\FOR{each content $\beta_{j} = \{S_{j},r_{j},T_{j}(\tau)\}$}
\STATE{$CachingTree(j) \leftarrow$ build the traditional path tree rooted at $S_{j}$}
\FOR{each caching node $i \in CachingTree(j)$}
\STATE{Calculate $W_{i}(\tau,j)$}
\STATE{Initialize $\mathcal{E}_{i}(\tau,j) \leftarrow W_{i}(\tau,j)$}
\ENDFOR
\FOR{each node $z \in Ancestor(i)$ in $CachingTree(j)$}
\STATE{$\mathcal{E}_{z}(\tau,j) = \mathcal{E}_{z}(\tau,j) + W_{i}(\tau,j)$}
\ENDFOR
\ENDFOR
\STATE{}
\end{algorithmic}
\end{algorithm}

For example, referring back to Figure \ref{fig:Prac}, and before a request for $\beta_{1}$ appears at any node, the values of $\mathcal{E}_{i}(\tau,j)$ are calculated as described in Algorithm \ref{alg:alg3}. Take node $v_{2}$ for example, then $\mathcal{E}_{2}(\tau,1) = W_{2}(\tau,1) + W_{5}(\tau,1) + W_{6}(\tau,1) = 4$. The final expectation values for the rest of the nodes are shown in the figure.

\subsubsection{Effective Caching Duration}\label{subsec:Timing}
The effective caching duration of a content depends on its arrival time. For example, most people read the newspaper in a period of two hours, so the caching duration should be two hours beginning at the arrival of the first request. However, if a new request for the newspaper arrives at a node in the middle of the range and was cached by the algorithm, then the caching duration should be one hour. This requires the broadcast of the first arrival time to all other nodes in the network. The additional overhead incurred by such broadcasting is negligible compared to the reduction of the \emph{Interest} packet broadcasting we achieve through the adoption of en-route caching.

\section{Performance Analysis}\label{sec:Performance}
In this Section, we show that any online algorithm has a competitive ratio that is lower bounded by $\Omega(\log(n))$, then we show that our algorithm does not violate the capacity constraints, and achieves a competitive ratio that is upper bounded by $\mathcal{O}(\log (n))$ under realistic settings.

\begin{proposition}\label{prop:Proposition2}
Any online algorithm has a competitive ratio which is lower bounded by $\Omega(\log(n))$.
\end{proposition}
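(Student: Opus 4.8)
The plan is to prove this by an adversary (lower-bound) construction in the classical style used for online virtual-circuit routing, online admission control, and online set cover: for every $n$ we build a specific caching network on $\Theta(n)$ nodes and feed it a request sequence in $L=\Theta(\log n)$ rounds, chosen \emph{adaptively} in response to the online algorithm's decisions, so that the online traffic savings are an $O(1/\log n)$ fraction of what the offline optimum extracts from the same sequence. Since the claim is a worst-case statement it suffices to fix a deterministic online rule and beat it; the randomized case then follows from Yao's principle applied to a distribution over the instances so constructed.

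For the network I would use a nested (equivalently, recursive) family: $R_1$ is the whole caching network, and $R_{k+1}\subset R_k$ is one of two symmetric ``halves'' of $R_k$, so the nesting has depth $L=\Theta(\log n)$. Every caching node has capacity for exactly one unit-size content ($D_i=r$), a single source $S$ holds all contents, and the durations $T_j$ are taken long enough that timing and non-preemption are never the limiting factor. In round $k$ a fresh content $\beta_k$ --- a new provider, charged identically to all others, so the equal-pay assumption is honoured --- is requested throughout $R_k$; the popularities $W_i$ and the hop counts $b_i(\cdot)$ are scaled geometrically with $k$ so that each round carries the \emph{same} offline value $v$, while, crucially, a node that caches $\beta_k$ is by non-preemption unavailable for the (more valuable per slot) contents revealed in later rounds inside $R_{k+1}$. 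After observing which nodes of $R_k$ the online algorithm chose to cache $\beta_k$ on, the adversary sets $R_{k+1}$ to be the half on which the algorithm is worst positioned for the remaining rounds --- a well-defined choice precisely because a node holds at most one content. This yields a recurrence $f(n)\ge f(n/2)+\Omega(1)$ for the best attainable competitive ratio $f$, hence $f(n)=\Omega(\log n)$.

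The accounting then compares two strategies. The offline algorithm knows the entire chain $R_1\supset\cdots\supset R_L$ in advance, so for each round it dedicates exactly the caching slots that drop out after that round to the content requested through them, collecting the full per-round value $v$ in each of the $L$ rounds: $P_{off}=\Omega(Lv)$. The online algorithm, steered round by round into the ``bad'' half, is shown to collect only $P_{on}=O(v)$ in total: any value it banks sits either along the surviving spine --- where the adaptive choices forced it to forgo the high-value late content in favour of at most one early content --- or inside a discarded half, which is never revisited and whose contributions are dominated by a single round's worth because of the geometric scaling. Dividing, and taking the suprema over $t$ and over input sequences in the definition of the competitive ratio, gives $P_{off}/P_{on}=\Omega(\log n)$.

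The main obstacle is calibrating the gadget so that three things hold simultaneously: (i) the per-round offline value $v$ is genuinely level-independent, so the $L$ rounds telescope to $\Theta(Lv)$ for the offline side instead of collapsing into a geometric series summing to $O(v)$; (ii) the relation between the popularities $W_i$ and the distances $b_i(\cdot)$ makes caching the current content near the top of the active sub-instance strictly more attractive than caching it deeper, which is what converts the online algorithm's missing foresight into a real commit-now-or-wait dilemma and guarantees the adversary always has a punishing half to continue on; and (iii) the whole construction remains faithful to the model --- en-route caching along fixed tree paths, one unit-size content per node, non-preemption, equal charging --- and the offline placement claimed above is actually feasible. The delicate point is (ii) together with the proof that \emph{every} online strategy --- not merely the transparent ``always cache'' and ``never cache'' extremes, but also any attempt to hedge by caching a fraction of the active nodes each round --- provably collapses to an $O(v)$ total while the offline keeps $\Theta(Lv)$; once the scaling constants that make this work are in hand, the rest of the argument is bookkeeping.
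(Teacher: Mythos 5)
Your plan is a recognizable adaptive-adversary construction, but as written it is a proof outline with its central step missing rather than a proof: you never pin down the gadget, and you yourself defer exactly the two claims that carry all the weight. First, the assertion that the offline algorithm collects a level-independent value $v$ in each of the $L$ rounds while \emph{every} online strategy (including hedging strategies that cache a $p_k$ fraction of the active region in round $k$) totals only $O(v)$ is stated as something to be arranged by ``calibrating the gadget,'' not derived; without explicit popularities $W_i$, distances, and a charging argument (e.g., a telescoping bound of the form $\sum_k p_k\prod_{j<k}(1-p_j)\le 1$), the claim cannot be checked. Second, the recurrence $f(n)\ge f(n/2)+\Omega(1)$ does not follow merely from ``the adversary picks the worse half'': competitive ratios of nested sub-instances do not compose additively unless you decouple what the online algorithm earns in each discarded half from what it forfeits on the surviving spine, and in this paper's model that decoupling is complicated by the fact that the saving from a cache at node $i$ is $\mathcal{E}_i(\tau_0,j)\,d_i(\tau_0,j)$, where $d_i$ depends on which \emph{other} nodes on the path chose to cache --- precisely the interaction the paper itself has to assume away in Corollaries \ref{cor:corollary1} and \ref{cor:corollary2}. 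Your construction has to fight that en-route coupling at every level of the recursion, and you give no argument that it wins.

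The paper's proof sidesteps all of this. It uses a single caching node $C$ of unit capacity attached to a source and $n$ requesters, and $\log n+1$ phases of many small contents, where phase $i$ is worth $2^i$ (its contents are requested by $2^i$ nodes) and all phases compete for the one cache. The only decision any online algorithm makes is the fraction $x_i$ of phase $i$ it stores, subject to $\sum_i x_i\le 1$; the offline algorithm, knowing where the adversary stops, caches only the last phase $k$ and earns $2^k$. The averaging identity $\sum_{k}G(k)/2^k=\sum_i x_i\sum_{k\ge i}2^{i-k}\le 2$ then produces a stopping point $k$ with ratio at least $(\log n)/2$. The lesson is that the scarce resource should be cache capacity at one node rather than spatial position across a recursive topology: that choice removes the $d_i$ interaction entirely and turns your ``every online strategy collapses'' step into one line. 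If you want to salvage your construction, instantiate it on a star (so $d_i\equiv 1$ and nodes' decisions are independent) and prove the telescoping bound explicitly --- but at that point you have essentially reproduced the paper's capacity argument, distributed over $n$ unit caches.
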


\begin{proof}
We show this proposition by giving an example network, such that the best online algorithm competitive ratio is lower bounded by $\Omega(\log(n))$. Consider a network which consists of $n+2$ nodes, as shown in Figure \ref{lb}. All contents are originally placed at node $S$, and node $C$ is the only node with caching capability with a unit cache capacity. All other nodes can request the contents. We consider a 2-time slots system where all contents are to be requested at the beginning of each time slot, though sequentially. Sequentially means that the algorithm has to make a caching decision for a content before considering the next one.

\begin{figure}[h]
\Huge
\centering
\includegraphics[scale = 0.7]{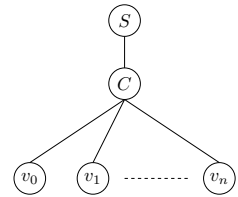}
\caption{Network for Lower Bound Proof.}
\label{lb}
\end{figure}

Consider a $\log(n) + 1$ phases of contents. For each phase $0 \leq i \leq \log(n)$, we have $1/\alpha$ identical contents, each with size $\alpha \ll 1$ and a caching time equal to 2 time slots. Contents in the same phase are destined for the same $2^{i}$ nodes.
The reason behind considering a 2-time slots system is that when a node caches a content, the traffic saving is considered for future requests.

Let $x_{i}$ be the fraction of contents stored from phase $i$ and $G_{i}$ be the traffic saving of the online algorithm gained from phase $i$, then
\begin{displaymath}
G_{i} = x_{i}2^{i}
\end{displaymath}
Consider the first $k$ phases, then the online traffic saving of these $k$ phases, denoted by $G(k)$, is
\begin{displaymath}
G(k) = \sum G_{i} = \sum_{i=0}^{k}x_{i}2^{i}
\end{displaymath}
The offline algorithm will cache the contents from phase $k$ only, gaining a traffic saving of $2^{k}$

Now consider the ratio of the online traffic saving to the offline traffic saving:
\begin{align}
\sum_{k=0}^{\log n}\frac{G(k)}{2^{k}} &= \sum_{k=0}^{\log n}\sum_{i=0}^{k}\frac{x_{i}2^{i}}{2^{k}} = \sum_{i=0}^{\log n}\sum_{k=i}^{\log n}x_{i}2^{i-k}\nonumber \\
&= \sum_{i=0}^{\log n}x_{i}\sum_{k=i}^{\log n}2^{i-k} \leq 1*2 \leq 2 \nonumber
\end{align}
Hence, there exist some $k$ such that $\frac{G(k)}{2^{k}} \leq \frac{2}{\log n}$. This means that the saving of the offline algorithm is at least within a $\log n$ factor of the savings achieved by any online algorithm.
\end{proof} 

Before we start the proof of satisfying the capacity constraints and the upper bound, we need to state the following two assumptions:
\begin{equation}
1 \leq \frac{1}{n}. \frac{\mathcal{E}_{i}(\tau,j)b_{i}(j)}{r_{j}T_{j}(\tau)} \leq F  \qquad \forall j, \forall i\neq S_{j}, \forall \tau \label{A1},
\end{equation}
and
\begin{equation}
r_{j} \leq \frac{\min{D_{i}}}{\log(\mu)} \qquad \forall j \label{A2},
\end{equation}
where $F$ is any constant large enough to satisfy the assumption in \eqref{A1}, $\mu = 2(nTF + 1)$, $n$ is the number of caching nodes, and $T = \max(T_{j}), \forall j$. The assumption in \eqref{A1} states that the amount of traffic savings for a content scales with the content's size and caching duration. The assumption in \eqref{A2} requires that the caching capacity of any node should be greater than the size of any content, which is a practical condition to assume.

We start by proving that the CRC algorithm does not violate the capacity constraints. After that, we show that CRC achieves a $\mathcal{O}(\log (n))$ competitive ratio. In all of the subsequent proofs, $\tau \in \{t_{0}(i,j),\ldots,t_{0}(i,j)+T_{j}(t_{0}(i,j))\}$, where $t_{0}(i,j)$ is the arrival time of $\beta_{j}$ at node $i$.

\begin{proposition}
The CRC algorithm does not violate the capacity constraints.
\end{proposition}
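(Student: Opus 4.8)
The plan is to prove the stronger claim that the CRC test never admits a content whose admission would push the relative load above one at any slot of its effective window. Since $\lambda_i(\tau,j)$ is, by definition, the total size of the contents residing in node $i$'s cache at time $\tau$ divided by $D_i$, and since the only place the load is increased is the update $\lambda_i(\tau,j+1)=\lambda_i(\tau,j)+r_j/D_i$ that follows an admission in Algorithm~\ref{alg:alg2}, this invariant is exactly the capacity constraint. I would argue it by contradiction: assume node $i$ admits $\beta_j$ at time $t_0$ (so the acceptance inequality of Algorithm~\ref{alg:alg2} holds) while for some $\tau^{*}\in\{t_0,\ldots,t_0+T_j(t_0)\}$ the resulting occupancy exceeds $D_i$. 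By the definition of $\lambda_i(\tau^{*},j)$ this means $\lambda_i(\tau^{*},j)>1-r_j/D_i$, and by assumption~\eqref{A2}, $r_j/D_i\le 1/\log\mu$, so in fact $\lambda_i(\tau^{*},j)>1-1/\log\mu$.

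Next I would play the two sides of the acceptance inequality against each other. On the cost side, plugging $\lambda_i(\tau^{*},j)>1-1/\log\mu$ into the exponential cost and using the identity $\mu^{1/\log\mu}=2$ gives $C_i(\tau^{*},j)>D_i\bigl(\mu^{1-1/\log\mu}-1\bigr)=D_i\bigl(\mu/2-1\bigr)=D_i\,nTF$, where the last equality is the choice $\mu=2(nTF+1)$; since every $C_i(\tau,j)\ge 0$, the right-hand side of the test is at least its $\tau^{*}$ term, $\frac{r_j}{D_i}C_i(\tau^{*},j)>nTF\,r_j$. On the savings side, assumption~\eqref{A1} bounds $\mathcal{E}_i(\tau,j)b_i(j)\le nFr_jT_j(\tau)$, and because the first on-path node holding $\beta_j$ is no farther than $S_j$ we have $d_i(t_0,j)\le b_i(j)$, so each term of the left-hand side is at most $nFr_jT$ and the whole sum is bounded by $nFr_jT$ times the number $T_j(t_0)+1$ of slots in the window. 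Comparing the two bounds shows the acceptance inequality could not have held, contradicting the assumption that $\beta_j$ was admitted; hence the invariant holds and the proposition follows.

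The step I expect to be the real obstacle is this final comparison: the savings bound carries an extra factor on the order of $T_j(t_0)+1$ coming from summing over the effective window, so the argument closes only because the cost side is inflated by the factor $\mu/2=nTF+1$, and one must check carefully that the definition $\mu=2(nTF+1)$ together with $T=\max_j T_j$ supplies exactly the slack needed for the single term $D_i\,nTF$ to dominate the aggregate saving $\sum_{\tau}\mathcal{E}_i(\tau,j)d_i(t_0,j)$ over the whole window. Lining up the window-length bookkeeping, the identity $\mu^{1/\log\mu}=2$, and the passage from one violated slot to the full sum is the delicate part; everything else is direct substitution of \eqref{A1}, \eqref{A2}, and the definition of $C_i(\tau,j)$ into the test of Algorithm~\ref{alg:alg2}.
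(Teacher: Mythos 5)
Your skeleton matches the paper's: argue by contradiction from the acceptance test of Algorithm~\ref{alg:alg2}, use \eqref{A2} together with the identity $\mu^{1/\log\mu}=2$ to show that a near-full cache forces $\frac{r_j}{D_i}C_i(\tau,j)\geq nTFr_j$, and use \eqref{A1} to cap the savings. But the final comparison --- the step you yourself flag as delicate --- genuinely does not close the way you have set it up. Your cost-side lower bound uses only the single slot $\tau^{*}$ at which the occupancy would be exceeded, giving $\sum_{\tau}\frac{r_j}{D_i}C_i(\tau,j) > nTFr_j$, while your savings-side upper bound is $\sum_{\tau}\mathcal{E}_i(\tau,j)d_i(t_0,j)\leq (T_j(t_0)+1)\,nFr_jT$. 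Since $nTFr_j=nFr_jT$, the one cost term only matches one slot's worth of savings; the aggregate savings can exceed it by a factor of up to $T_j(t_0)+1$, so the acceptance inequality can still hold and no contradiction is reached. The inflation $\mu/2=nTF+1$ is already spent in making a single cost term comparable to a single savings term; it does not supply the extra factor of the window length that your bookkeeping requires.

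The paper closes the argument differently: it takes the near-full-load bound $\lambda_i(\tau,j)>1-r_j/D_i$ to hold at \emph{every} slot $\tau$ of the effective window (this is how ``the first content that caused the relative load to exceed $1$'' is read there, with $\tau$ ranging over the whole window), so that $\frac{r_j}{D_i}C_i(\tau,j)\geq nTFr_j\geq nFr_jT_j(\tau)\geq \mathcal{E}_i(\tau,j)b_i(j)\geq \mathcal{E}_i(\tau,j)d_i(t_0,j)$ holds slot by slot, and the summed test of Algorithm~\ref{alg:alg2} then fails term by term. To repair your version you would need either to establish that per-slot load bound throughout the window, or to otherwise control the savings accrued at slots where the cache is not near full (where $C_i(\tau,j)$ may be small); with the single slot $\tau^{*}$ alone and $\mu=2(nTF+1)$ fixed, the numbers do not work out.
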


\begin{proof}
Let $\beta_{j}$ be the first content that caused the relative load at node $i$ to exceed 1. By the definition of the relative load, we have
\begin{displaymath}
\lambda_{i}(\tau,j) > 1 - \frac{r_{j}}{D_{i}}
\end{displaymath}
using the assumption in \eqref{A2} and the definition of the cost function, we get

\begin{align}
\frac{C_{i}(\tau,j)}{D_{i}} &= \mu^{\lambda_{i}(\tau,j)} - 1 \geq \mu^{1 - \frac{r_{j}}{D_{i}}} - 1 \nonumber\\
& \geq \mu^{1 - \frac{1}{\log \mu}} - 1 \geq \frac{\mu}{2} - 1 \geq nTF \nonumber
\end{align}
Multiplying both sides by $r_{j}$ and using the assumption in \eqref{A1}, we get
\begin{center}
$\frac{r_{j}}{D_{i}}C_{i}(\tau,j) \geq nTFr_{j} \geq \mathcal{E}_{i}(\tau,j)b_{i}(j) \geq \mathcal{E}_{i}(\tau,j)d_{i}(t_{0},j)$
\end{center}
From the definition of our algorithm, $\beta_{j}$ should not be cached at node $i$. Therefore, the CRC algorithm does not violate the capacity constraints.
\end{proof}

The next lemma shows that the traffic saving gained by our algorithm is lower bounded by the sum of the caching costs.
\begin{lemma}
Let $A$ be the set of indices of contents cached by the CRC algorithm, and $k$ be the last index, then
\begin{equation}
2 \log(\mu) \sum_{i,j \in A,\tau} [\mathcal{E}_{i}(\tau,j)d_{i}(t_{0},j)] \geq \sum_{i,\tau}C_{i}(\tau,k+1) \label{lemma1}
\end{equation}
\label{lemma:lemma1}
\end{lemma}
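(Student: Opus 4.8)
The plan is to track how the cost function at each node grows as contents are cached, and to charge each increment of cost against the traffic savings that triggered it. Fix a node $i$ and consider the contents in $A$ that node $i$ caches, in the order they arrive, say $\beta_{j_1}, \beta_{j_2}, \ldots$. Caching $\beta_{j}$ raises the relative load at time $\tau$ from $\lambda_{i}(\tau,j)$ to $\lambda_{i}(\tau,j)+\frac{r_j}{D_i}$, so the cost at that node and time slot increases by
\begin{displaymath}
\Delta C_{i}(\tau,j) = D_{i}\left(\mu^{\lambda_{i}(\tau,j)+\frac{r_j}{D_i}} - \mu^{\lambda_{i}(\tau,j)}\right) = D_{i}\,\mu^{\lambda_{i}(\tau,j)}\left(\mu^{\frac{r_j}{D_i}} - 1\right).
\end{displaymath}
First I would bound $\mu^{r_j/D_i} - 1$ from above. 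Since $r_j/D_i \le 1/\log\mu$ by assumption \eqref{A2}, the exponent $\frac{r_j}{D_i}\log\mu$ lies in $[0,1]$, and on that range $e^{x}-1 \le x\cdot\frac{e-1}{1} \le 2x$ (more simply $e^x - 1 \le 2x$ for $x\in[0,1]$ since $e-1<2$). Hence $\mu^{r_j/D_i}-1 = e^{(r_j/D_i)\log\mu} - 1 \le 2\frac{r_j}{D_i}\log\mu$, giving
\begin{displaymath}
\Delta C_{i}(\tau,j) \le 2\log(\mu)\, r_j\, \mu^{\lambda_{i}(\tau,j)} = 2\log(\mu)\,\frac{r_j}{D_i}\big(C_i(\tau,j) + D_i\big).
\end{displaymath}

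Next I would relate this to the caching test that was actually passed. Because $\beta_j$ was cached at node $i$, Algorithm~\ref{alg:alg2} guarantees $\sum_{\tau}\mathcal{E}_{i}(\tau,j)d_{i}(t_0,j) \ge \sum_{\tau}\frac{r_j}{D_i}C_i(\tau,j)$. I would like the right-hand side increment to be dominated by $2\log(\mu)$ times the left-hand side; the term $2\log(\mu)\frac{r_j}{D_i}C_i(\tau,j)$ is already handled, and the leftover term $2\log(\mu)\frac{r_j}{D_i}D_i = 2\log(\mu)r_j$ per time slot must be absorbed. Summing over the $T_j(t_0)+1$ relevant slots gives a leftover of at most $2\log(\mu)r_j(T_j+1)$, and assumption \eqref{A1} says $r_j T_j(\tau) \le \mathcal{E}_i(\tau,j)b_i(j)/n \le \mathcal{E}_i(\tau,j)d_i(t_0,j)$ (using $d_i(t_0,j)\le b_i(j)$ is the wrong direction — instead I use that the test itself already provides $\sum_\tau \mathcal{E}_i(\tau,j)d_i(t_0,j)$ as a reservoir, and \eqref{A1} bounds $r_j(T_j+1)$ by a quantity comparable to $\mathcal{E}_i(\tau,j)b_i(j)$). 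So each cached content $\beta_j$ contributes total cost increment across all nodes and slots at most $2\log(\mu)\big(\sum_\tau \frac{r_j}{D_i}C_i(\tau,j) + \text{(reservoir term bounded via \eqref{A1})}\big) \le 4\log(\mu)\sum_\tau \mathcal{E}_i(\tau,j)d_i(t_0,j)$, or with the constants arranged as in the statement. Telescoping over all $j\in A$ up to index $k$, the left-hand accumulated cost at every node and slot is exactly $\sum_{i,\tau}C_i(\tau,k+1)$ (the cost after all of $A$ has been processed, since contents not in $A$ do not change the load), while the right-hand side accumulates to $2\log(\mu)\sum_{i,j\in A,\tau}\mathcal{E}_i(\tau,j)d_i(t_0,j)$.

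The main obstacle I anticipate is the bookkeeping of the non-preemptive, multi-slot structure: the relative load $\lambda_i(\tau,j)$ varies with $\tau$ because earlier contents flush at different times, so the per-slot telescoping must be done slot-by-slot and then summed, and I must be careful that the quantity $C_i(\tau,k+1)$ on the right of \eqref{lemma1} is interpreted as the cost evaluated at the final load profile. A secondary subtlety is pinning down the exact constant — whether the leftover $D_i$ term forces a factor $4\log\mu$ that then gets folded back to $2\log\mu$ by a sharper estimate of $\mu^{r_j/D_i}-1$ or by exploiting slack in $\mu = 2(nTF+1)$. I would resolve this by choosing the inequality $\mu^{r_j/D_i}-1 \le \log(\mu)\frac{r_j}{D_i}\cdot\frac{\mu^{r_j/D_i}}{1}$ type bound, or simply by noting that \eqref{A1}'s lower bound $1 \le \frac{1}{n}\frac{\mathcal{E}_i(\tau,j)b_i(j)}{r_j T_j(\tau)}$ gives enough room to absorb the additive term with the stated constant.
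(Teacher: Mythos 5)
Your proposal follows essentially the same route as the paper's proof: telescope the cost over the cached contents, bound each increment $D_i\mu^{\lambda_i(\tau,j)}(\mu^{r_j/D_i}-1)$ by an affine function of $C_i(\tau,j)$, discharge the $\frac{r_j}{D_i}C_i(\tau,j)$ part against the caching test of Algorithm~\ref{alg:alg2}, and absorb the leftover additive $r_j$ term using assumption \eqref{A1}. The only point where you drift from the stated constant is the estimate $e^x-1\le 2x$; the paper's $\log$ is base $2$ (it writes $\mu^{r_j/D_i}=2^{(r_j/D_i)\log\mu}$ and uses $2^x-1\le x$ for $x\in[0,1]$, valid here by \eqref{A2}), which yields the increment bound $\log(\mu)\bigl[\tfrac{r_j}{D_i}C_i(\tau,j)+r_j\bigr]$ with no extra factor of $2$, so the final factor of $2$ comes solely from the \eqref{A1} absorption step and the constant $2\log\mu$ falls out exactly as stated. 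This is precisely the ``sharper estimate'' you conjectured would be needed, so your argument is complete once the base of the logarithm is fixed consistently.
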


\begin{proof}
By induction on $k$. When $k = 0$, the cache is empty and the right hand side of the inequality is 0. When $\beta_{j}$ is not cached by the online algorithm, neither side of the inequality is changed. Then it is enough to show, for a cached content $\beta_{j}$, that:

\begin{eqnarray}
\lefteqn{2 \log(\mu) \sum_{i,\tau}[\mathcal{E}_{i}(\tau,j)d_{i}(t_{0},j)]} \nonumber \\
 & &\geq \sum_{i,\tau}[ C_{i}(\tau,j+1) - C_{i}(\tau,j)] \nonumber
\end{eqnarray}
since summing both sides over all $j \in A$ will yield \eqref{lemma1}.

Consider a node $i$, the additional cost incurred by caching $\beta_{j}$ is given by:
\begin{align}
C_{i}(\tau,j+1) - C_{i}(\tau,j) &= D_{i}[\mu^{\lambda_{i}(\tau,j+1)} - \mu^{\lambda_{i}(\tau,j)}] \nonumber\\
&=  D_{i}\mu^{\lambda_{i}(\tau,j)}[\mu^{\frac{r_{j}}{D_{i}}} - 1] \nonumber\\
&= D_{i}\mu^{\lambda_{i}(\tau,j)}[2^{\log\mu \frac{r_{j}}{D_{i}}} - 1] \nonumber
\end{align}

Since $2^{x} - 1 \leq x$ for $0 \leq x \leq 1$ and using the assumption in \eqref{A2}

\begin{align}
C_{i}(\tau,j+1) - C_{i}(\tau,j) &\leq D_{i}\mu^{\lambda_{i}(\tau,j)}[\frac{r_{j}}{D_{i}}\log\mu] \nonumber\\
&\leq r_{j}\log\mu [\frac{C_{i}(\tau,j)}{D_{i}} + 1] \nonumber\\
&\leq \log\mu [\frac{ r_{j}}{D_{i}}C_{i}(\tau,j) + r_{j}] \nonumber
\end{align}
Summing over $\tau$, $i$, and the fact that $\beta_{j}$ is cached, we get
\begin{eqnarray}
\lefteqn{\sum_{i}\sum_{\tau}[ C_{i}(\tau,j+1) - C_{i}(\tau,j)]} \nonumber\\
& &\leq  \log\mu \sum_{i}\sum_{\tau}[\frac{ r_{j}}{D_{i}}C_{i}(\tau,j) + r_{j}] \nonumber\\
& &\leq \log\mu [\sum_{i}\mathcal{E}_{i}(\tau,j)d_{i}(t_{0},j) + \sum_{i}\sum_{\tau} r_{j}] \nonumber\\
& &\leq 2 \log\mu \sum_{i}\mathcal{E}_{i}(\tau,j)d_{i}(t_{0},j) \nonumber
\end{eqnarray}
\end{proof}

In the next lemma, $d_{i}(\tau,j)$ is defined for the online algorithm.

\begin{lemma}
Let $Q$ be the set of indices of contents cached by the offline algorithm, but not the CRC algorithm. Let $l = \argmax_{j \in Q}(C_{i}(\tau,j))$. Then
\begin{displaymath}
\sum_{i}\sum_{j \in Q}\sum_{\tau} [\mathcal{E}_{i}(\tau,j)d_{i}(t_{0},j)] \leq \sum_{i}\sum_{\tau} C_{i}(\tau,l)
\end{displaymath}
\label{lemma:lemma2}
\end{lemma}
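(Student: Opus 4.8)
The plan is to exploit the fact that every content indexed in $Q$ was examined and \emph{rejected} by CRC at the node where the offline algorithm caches it. Fix $j\in Q$ and let $i$ be a node at which the offline solution holds $\beta_{j}$ (the outer sum over $i$ in the statement ranging, as in \eqref{eq:savings}, over such nodes). Since CRC does not cache $\beta_{j}$ at $i$, the test in Algorithm~\ref{alg:alg2} must have failed there, which --- with $d_{i}(t_{0},j)$ and the relative load entering $C_{i}(\cdot,j)$ referring, as stated before the lemma, to the online cache configuration seen by $\beta_{j}$'s request at node $i$ --- means
\begin{displaymath}
\sum_{\tau}\mathcal{E}_{i}(\tau,j)\,d_{i}(t_{0},j)\;<\;\sum_{\tau}\frac{r_{j}}{D_{i}}\,C_{i}(\tau,j).
\end{displaymath}
First I would record this inequality for every $j\in Q$.

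Next I would sum over $j\in Q$ (and the associated $i$) and interchange the order of summation so as to group the right-hand side by a fixed node $i$ and a fixed time slot $\tau$. For such a pair, only the contents of $Q$ that offline keeps cached at $i$ during slot $\tau$, say $Q_{i}(\tau)$, contribute. By the choice $l=\argmax_{j\in Q}C_{i}(\tau,j)$ --- equivalently, because $\lambda_{i}(\tau,\cdot)$, and hence $C_{i}(\tau,\cdot)$, is non-decreasing in the content index --- we have $C_{i}(\tau,j)\le C_{i}(\tau,l)$ for all $j\in Q_{i}(\tau)$, so
\begin{displaymath}
\sum_{j\in Q_{i}(\tau)}\frac{r_{j}}{D_{i}}\,C_{i}(\tau,j)\;\le\;\frac{C_{i}(\tau,l)}{D_{i}}\sum_{j\in Q_{i}(\tau)}r_{j}.
\end{displaymath}
Feasibility of the offline algorithm forces $\sum_{j\in Q_{i}(\tau)}r_{j}\le D_{i}$, so the right-hand side is at most $C_{i}(\tau,l)$; summing back over $i$ and $\tau$ yields $\sum_{i}\sum_{j\in Q}\sum_{\tau}\mathcal{E}_{i}(\tau,j)d_{i}(t_{0},j)\le\sum_{i}\sum_{\tau}C_{i}(\tau,l)$.

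The main obstacle I anticipate is the bookkeeping that makes the rejection inequality applicable in the right place: one must argue that whenever the offline algorithm caches $\beta_{j}$ at node $i$, there was a request for $\beta_{j}$ whose path through $i$ actually triggered the Cost-Reward test there, and that the online relative load present at that instant is dominated by $\lambda_{i}(\tau,l)$, so that the cost charged in the failed test is at most $C_{i}(\tau,l)$. This is exactly where monotonicity of the online load --- both over time and in the content index --- together with the non-preemptive caching assumption is needed. Once the offline placement, the online cache state, and the time slots are aligned, the capacity bound and the summation regrouping are routine.
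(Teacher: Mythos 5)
Your argument is correct and follows essentially the same route as the paper's proof: invoke the failed CRC caching test to bound $\sum_{\tau}\mathcal{E}_{i}(\tau,j)d_{i}(t_{0},j)$ by $\sum_{\tau}\frac{r_{j}}{D_{i}}C_{i}(\tau,j)$, replace $C_{i}(\tau,j)$ by the maximal cost $C_{i}(\tau,l)$, and then use the offline algorithm's capacity feasibility $\sum_{j\in Q}\frac{r_{j}}{D_{i}}\leq 1$ to absorb the sum of sizes. Your restriction to the per-slot set $Q_{i}(\tau)$ and your explicit remark about aligning the online cache state with the rejection instant are slightly more careful than the paper's version, but they do not change the substance of the argument.
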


\begin{proof}
Since $\beta_{j}$ was not cached by the online algorithm, we have:
\begin{align}
\sum_{\tau}\mathcal{E}_{i}(\tau,j)d_{i}(t_{0},j) &\leq \sum_{\tau}\frac{r_{j}}{D_{i}}C_{i}(\tau,j) \nonumber \\
& \leq \sum_{\tau}\frac{r_{j}}{D_{i}}C_{i}(\tau,l) \nonumber \\
\sum_{i}\sum_{\tau}\mathcal{E}_{i}(\tau,j)d_{i}(t_{0},j) &\leq \sum_{i}\sum_{\tau}\frac{r_{j}}{D_{i}}C_{i}(\tau,l) \nonumber
\end{align}
Summing over all $j \in Q$
\begin{align}
\sum_{i}\sum_{j \in Q}\sum_{\tau}\mathcal{E}_{i}(\tau,j)d_{i}(t_{0},j) &\leq \sum_{i}\sum_{\tau}C_{i}(\tau,l)\sum_{j \in Q}\frac{r_{j}}{D_{i}} \nonumber \\
& \leq \sum_{i}\sum_{\tau}C_{i}(\tau,l) \nonumber
\end{align}

Since any offline algorithm cannot exceed a unit relative load, $\sum_{j \in Q}\frac{r_{j}}{D_{i}} \leq 1$.
\end{proof}

Combining Lemma \ref{lemma:lemma1} and Lemma \ref{lemma:lemma2}, we have the following lemma.
\begin{lemma}
Let $A^{*}$ be the set of indices of the contents cached by the offline algorithm, and let $k$ be the last index. Then:
\begin{eqnarray}
\lefteqn{\sum_{i,j \in A^{*},\tau}\mathcal{E}_{i}(\tau,j)d_{i}(t_{0},j)}\nonumber \\
& &\leq 2\log(2\mu) \sum_{i,j \in A,\tau}\mathcal{E}_{i}(\tau,j)d_{i}(t_{0},j)\nonumber
\end{eqnarray}
\end{lemma}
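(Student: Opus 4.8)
The plan is to split the offline solution into the part it shares with the online (CRC) solution and the part it does not, and then apply the two preceding lemmas to the two pieces. Write $A^{*}=(A^{*}\cap A)\cup Q$, where $A$ is the set of indices cached by CRC and $Q=A^{*}\setminus A$ is precisely the set appearing in Lemma~\ref{lemma:lemma2}. Because every term $\mathcal{E}_{i}(\tau,j)d_{i}(t_{0},j)$ is nonnegative and $A^{*}\cap A\subseteq A$, the contribution of the shared contents to the offline savings is bounded by the full online savings $\sum_{i,j\in A,\tau}\mathcal{E}_{i}(\tau,j)d_{i}(t_{0},j)$. The contribution of $Q$ is bounded, by Lemma~\ref{lemma:lemma2}, by $\sum_{i,\tau}C_{i}(\tau,l)$ with $l=\argmax_{j\in Q}C_{i}(\tau,j)$.

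The next step is to pass from $C_{i}(\tau,l)$ to $C_{i}(\tau,k+1)$, where $k=\max A$ is the last index cached by CRC (the $k$ of Lemma~\ref{lemma:lemma1}), and then to invoke Lemma~\ref{lemma:lemma1} to obtain $\sum_{i,\tau}C_{i}(\tau,k+1)\le 2\log(\mu)\sum_{i,j\in A,\tau}\mathcal{E}_{i}(\tau,j)d_{i}(t_{0},j)$. Combining the three bounds gives $\sum_{i,j\in A^{*},\tau}\mathcal{E}_{i}(\tau,j)d_{i}(t_{0},j)\le\bigl(1+2\log(\mu)\bigr)\sum_{i,j\in A,\tau}\mathcal{E}_{i}(\tau,j)d_{i}(t_{0},j)$, and since $1+2\log(\mu)\le 2+2\log(\mu)=2\log(2\mu)$, the stated inequality follows.

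The one step that needs care — the main, and rather minor, obstacle — is the inequality $C_{i}(\tau,l)\le C_{i}(\tau,k+1)$. This rests on two facts about the cost $C_{i}(\tau,j)=D_{i}(\mu^{\lambda_{i}(\tau,j)}-1)$: the relative load $\lambda_{i}(\tau,j)$ is nondecreasing in the content index $j$, and it is increased only when CRC actually caches a content, so $\lambda_{i}(\tau,j)=\lambda_{i}(\tau,k+1)$ for every $j>k$. Since $l$ indexes an actual content (in particular it is no larger than the total number of contents), monotonicity then gives $C_{i}(\tau,l)\le C_{i}(\tau,k+1)$, even though $l\in Q$ is a content CRC chose not to cache. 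Everything else is the routine arithmetic indicated above.
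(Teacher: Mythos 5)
Your proposal is correct and follows essentially the same route as the paper's proof: decompose $A^{*}$ into $Q$ and $A^{*}\cap A$, bound the shared part by the online savings, apply Lemma~\ref{lemma:lemma2} to $Q$, pass from $C_{i}(\tau,l)$ to $C_{i}(\tau,k+1)$ by monotonicity of the load, invoke Lemma~\ref{lemma:lemma1}, and absorb the extra $1$ into $2\log(2\mu)$. The only difference is that you explicitly justify the step $C_{i}(\tau,l)\le C_{i}(\tau,k+1)$, which the paper uses without comment.
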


\begin{proof}
The traffic savings of the offline algorithm is given by:
\begin{eqnarray}
\lefteqn{\sum_{i,j \in A^{*},\tau}\mathcal{E}_{i}(\tau,j)d_{i}(t_{0},j)}\nonumber \\
& &=\sum_{i,j \in Q,\tau}\mathcal{E}_{i}(\tau,j)d_{i}(t_{0},j) + \sum_{i,j \in A^{*}/Q,\tau}\mathcal{E}_{i}(\tau,j)d_{i}(t_{0},j) \nonumber \\
& &\leq \sum_{i,j \in Q,\tau}\mathcal{E}_{i}(\tau,j)d_{i}(t_{0},j) + \sum_{i,j \in A,\tau}\mathcal{E}_{i}(\tau,j)d_{i}(t_{0},j) \nonumber \\
& &\leq \sum_{i,\tau}C_{i}(\tau,l) + \sum_{i,j \in A,\tau}\mathcal{E}_{i}(\tau,j)d_{i}(t_{0},j) \nonumber \\
& &\leq \sum_{i,\tau}C_{i}(\tau,k+1) + \sum_{i,j \in A,\tau}\mathcal{E}_{i}(\tau,j)d_{i}(t_{0},j) \nonumber \\
& &\leq (2\log\mu + 1)\sum_{i,j \in A,\tau}\mathcal{E}_{i}(\tau,j)d_{i}(t_{0},j) \nonumber \\
& &\leq 2\log(2\mu) \sum_{i,j \in A,\tau}\mathcal{E}_{i}(\tau,j)d_{i}(t_{0},j) \nonumber
\end{eqnarray}
\end{proof}

Note that $d_{i}(\tau,j)$ in the previous lemmas is defined by the online algorithm. In order to achieve optimality using this proof technique, $d_{i}(\tau,j)$ of the online algorithm should be equal to $d_{i}(\tau,j)$ of the offline algorithm. In the next two corollaries, we show cases where $d_{i}(\tau,j)$ of the online algorithm is equal to $d_{i}(\tau,j)$ of the offline algorithm.  

\begin{corollary}
When there is only one caching node in every path, then $d_{i}(\tau,j)$ of the online algorithm is equal to $d_{i}(\tau,j)$ of the offline algorithm, and our algorithm achieves asymptotic optimality.
\label{cor:corollary1}
\end{corollary}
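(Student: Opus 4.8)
The plan is to deduce the corollary directly from Lemma~3 (together with Proposition~\ref{prop:Proposition2}) by showing that in this special case $d_i(t_0,j)$ is not a configuration‑dependent quantity at all: for every caching node $i$ and every content $\beta_j$ it is forced to equal the fixed topological distance $b_i(j)$, so it takes the same value under the CRC execution and under the optimal offline execution, which is exactly the hypothesis that the proof technique of Lemma~3 needs.

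First I would record the structural consequence of the hypothesis. If every path from a node to a content source contains at most one caching node, then for a caching node $i$ the path from $i$ to $S_j$ already contains the caching node $i$, hence contains no other caching node. In particular, at the time $t_0=t_0(i,j)$ at which $i$ first considers $\beta_j$, no node strictly between $i$ and $S_j$ holds $\beta_j$ (non‑caching nodes have capacity $0$, and $i$ has not yet decided), so a request for $\beta_j$ issued at $i$ is served by $S_j$ and $d_i(t_0,j)=b_i(j)$. By the freezing assumption on $d_i(\tau,j)$ from the settings, $d_i(\tau,j)=b_i(j)$ for every $\tau\in\{t_0,\dots,t_0+T_j(t_0)\}$ as well. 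Since $b_i(j)$ depends only on the routing topology, the online and offline values of $d_i(\cdot,j)$ coincide. The same antichain structure also shows that the caching nodes are pairwise incomparable in the tree rooted at $S_j$, so caching $\beta_j$ at one caching node never diverts traffic that would otherwise pass through another; hence $\mathcal{E}_i(\tau,j)$ likewise stays at its initial topological value for both executions, and $\mathcal{E}_i(\tau,j)d_i(t_0,j)$ denotes the same number in the offline and the online savings expressions.

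With these identifications the chain of inequalities in Lemma~3 reads, without reinterpretation, $P_{\mathrm{off}}\le(2\log\mu+1)\,P_{\mathrm{on}}\le 2\log(2\mu)\,P_{\mathrm{on}}$, where $P_{\mathrm{off}}$ and $P_{\mathrm{on}}$ are the true traffic savings of the offline algorithm and of CRC. Substituting $\mu=2(nTF+1)$ gives $2\log(2\mu)=2\log\!\big(4(nTF+1)\big)=\mathcal{O}(\log n)$ with $T$ and $F$ regarded as constants, so CRC has competitive ratio $\mathcal{O}(\log n)$; together with the $\Omega(\log n)$ lower bound of Proposition~\ref{prop:Proposition2} this makes CRC asymptotically optimal in the number of caching nodes.

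The only non‑routine step is the structural one, and the care it requires is entirely in the bookkeeping of what ``a path'' and ``a caching node on a path'' mean --- specifically that a caching node counts as the unique caching node of its own path to $S_j$, which is what pins $d_i(t_0,j)$ to $b_i(j)$ \emph{before} any caching decision is made and therefore makes the online and offline distance vectors identical. Once that is established, the corollary is just an application of Lemma~3 and Proposition~\ref{prop:Proposition2}.
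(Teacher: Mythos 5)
Your proposal is correct and matches the paper's intended argument: the paper states this corollary without a separate proof, relying on the observation preceding it that the proof technique of the combined lemma goes through whenever the online and offline values of $d_{i}(\tau,j)$ coincide, and your structural step (no other caching node lies between $i$ and $S_{j}$, so $d_{i}(t_{0},j)=b_{i}(j)$ is a purely topological, algorithm-independent quantity) is exactly the justification the paper leaves implicit. The concluding application of the $2\log(2\mu)=\mathcal{O}(\log n)$ bound together with Proposition~\ref{prop:Proposition2} is likewise the same route the paper takes.
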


\begin{corollary}
When every node in the path shares the same caching decision, then $d_{i}(\tau,j)$ of the online algorithm is equal to $d_{i}(\tau,j)$ of the offline algorithm, and our algorithm achieves asymptotic optimality.
\label{cor:corollary2}
\end{corollary}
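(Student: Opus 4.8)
The plan is to lean on the chain of inequalities already in place: Lemma~\ref{lemma:lemma1} together with Lemma~\ref{lemma:lemma2} yields the last lemma above, which states that the offline traffic savings are at most $2\log(2\mu)$ times the online traffic savings \emph{provided} the $d_{i}(\tau,j)$ used to score the offline solution is the same as the $d_{i}(\tau,j)$ generated along the CRC execution. So the entire substance of the corollary splits into two tasks: (i) verifying that the hypothesis ``every node in the path shares the same caching decision'' forces the online and offline values of $d_{i}(\tau,j)$ to coincide, and then (ii) converting the constant $2\log(2\mu)$ into $\mathcal{O}(\log n)$ and invoking the matching lower bound of Proposition~\ref{prop:Proposition2}.

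For task (i), I would show that under the hypothesis $d_{i}(\tau,j)$ is a function of the caching tree alone and is blind to the particular caching decisions. Fix a content $\beta_{j}$ and scan its requests in time order. Since all caching nodes on the segment from a requesting node up to the current first holder of $\beta_{j}$ act identically, the set of nodes holding $\beta_{j}$ at any instant, restricted to any root-to-leaf path of the caching tree, is an interval reaching up to $S_{j}$; and because these nodes all cache at the same time with the same effective duration $T_{j}$, they also expire simultaneously, so this ``interval reaching to $S_{j}$'' structure persists over time. Consequently the first node holding $\beta_{j}$ on the path from any $i$ toward $S_{j}$ is topologically pinned down --- it is $i$'s parent when caching for $\beta_{j}$ is ``on'' along that path, and $S_{j}$ (distance $b_{i}(j)$) otherwise --- and in neither case does the value reference which algorithm produced the decisions. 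Hence $d_{i}(\tau,j)$ is identical for the online and the offline algorithm, which is exactly the precondition needed to invoke the last lemma.

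For task (ii), recall $\mu = 2(nTF+1)$ with $T=\max_{j}T_{j}$ and $F$ the constant from~\eqref{A1}, both independent of $n$. Then $2\log(2\mu) = 2\log\!\big(4(nTF+1)\big) = \mathcal{O}(\log n)$, so CRC has competitive ratio $\mathcal{O}(\log n)$; combined with the $\Omega(\log n)$ lower bound of Proposition~\ref{prop:Proposition2} on \emph{any} online algorithm, this gives asymptotic optimality in the number of caching nodes and proves the corollary.

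The step I expect to be the crux is the decoupling claim in (i): making precise, as a maintained invariant, what ``every node in the path shares the same caching decision'' means, and checking that the en-route bookkeeping of Algorithm~\ref{alg:alg1} --- the successive subtractions of the header field from $\mathcal{E}_{z}(\tau,j)$, the interaction with contents cached at earlier times that may expire at different moments, and the non-preemption constraint --- really does preserve the interval-to-$S_{j}$ structure, so that $d_{i}(\tau,j)$ genuinely loses its dependence on the decision history. Everything after that invariant is secured follows immediately from the preceding lemmas.
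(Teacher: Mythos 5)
Your proposal takes essentially the same route as the paper: the paper states this corollary with \emph{no proof at all}, presenting it as an immediate consequence of the remark preceding it (the combined lemma already holds with the online algorithm's $d_{i}(\tau,j)$, so the $2\log(2\mu)$ bound transfers to the true offline savings whenever the online and offline distances coincide), together with $2\log(2\mu)=\mathcal{O}(\log n)$ since $\mu = 2(nTF+1)$, and the $\Omega(\log n)$ lower bound of Proposition~\ref{prop:Proposition2}. Your step (ii) is exactly that argument, and your step (i) attempts a justification that the paper omits entirely. One caution on (i): your case analysis (``distance $1$ if caching for $\beta_{j}$ is on along the path, $b_{i}(j)$ otherwise'') still selects the case according to which algorithm made the decisions, so as written it does not quite establish that $d_{i}(\tau,j)$ is independent of the decision history --- you flag this yourself as the crux, and since the paper asserts the distance equality without any argument, your treatment is if anything more complete than the source.
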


\section{Extension to CRC Algorithm}\label{sec:Extensions}
In this section, we provide an extension to the CRC algorithm. We show the effeciency of this extension with respect to currently deployed caching schemes through extensive simulations.

\subsection{Replacement-CRC}
The basic CRC algorithm provides quality of service guarantees for content providers by not replacing their contents once they are cached. Content providers, in return, are chraged to provide incentives for the caching nodes based on the caching policy discussed in section \ref{subsec:qos}. In this section, we present an extension for the basic CRC algorithm that allows content replacement.

The settings for Replacement-CRC are the same as for the basic CRC algorithm. However, there is no restriction on keeping a content $\beta_{j}$ in the cache of node $i$ for the whole effective caching duration time $T_{j}(\tau)$, as $\beta_{j}$ may be replaced by another content.

We present the details of the Replacement-CRC algorithm in algorithm \ref{alg:alg5}.

\begin{algorithm}
\caption{Replacement-CRC}
\label{alg:alg5}
\begin{algorithmic}
\STATE{A new request for $\beta_{j}$ appears at node $i$ at time $t_{0}$}
\STATE{$\forall \tau \in \{t_{0},\ldots,t_{0}+T_{j}(t_{0})\}$, Compute $\lambda_{i}(\tau,j)$, $C_{i}(\tau,j)$}
\IF{$\sum_{\tau}\mathcal{E}_{i}(t_{0},j)d_{i}(t_{0},j) \geq \sum_{\tau}\frac{r_{j}}{D_{i}}C_{i}(\tau,j)$}
\STATE{Cache $\beta_{j}$ at node $i$}
\STATE{$\tau_{0}(i,j) = t_{0}(i,j)$}
\STATE{$\forall \tau \in \{t_{0},\ldots,t_{0}+T_{j}(t_{0})\}, \lambda_{i}(\tau,j+1) = \lambda_{i}(\tau,j) + \frac{r_{j}}{D_{i}}$}
\ELSE 
\STATE{$\forall \beta_{k} \in Cache_{i}(t_{0}) \cup \beta_{j}, \forall \tau \in \{t_{0},\ldots,t_{0}+T_{k}(t_{0})\}$, Compute}
\STATE{$\lambda_{i}^{k}(\tau,j) = \lambda_{i}(\tau,j) + \frac{r_{j}}{D_{i}} - \frac{r_{k}}{D_{i}}$}
\STATE{$C_{i}^{k}(\tau,j) = D_{i}[\mu^{\lambda_{i}^{k}(\tau,j) - 1}]$}
\IF{$\lambda_{i}^{k}(\tau,j) \leq 1$}
\STATE{$Diff(k) = \sum_{\tau}\mathcal{E}_{i}(\tau_{0},k)d_{i}(\tau_{0},k) - \sum_{\tau}\frac{r_{j}}{D_{i}}C_{i}^{k}(\tau,j)$}
\ENDIF
\STATE{$l = \argmin_{k}(Diff)$}
\IF{$l \neq j$}
\STATE{Replace $\beta_{l}$ with $\beta_{j}$}
\STATE{$\forall \tau \in \{t_{0},\ldots,t_{0}+T_{j}(t_{0})\}, \lambda_{i}(\tau,j+1) = \lambda_{i}^{l}(\tau,j)$}
\ENDIF
\ENDIF
\end{algorithmic}
\end{algorithm}

Algorithm \ref{alg:alg5} states that if the traffic savings gained by caching a new content $\beta_{j}$ is greater than the caching cost at node $i$, then the algorithm decides to cache. Otherwise, we compare the difference between the traffic savings and the caching costs for every $\beta_{k} \in Cache_{i}(\tau)$, if it is replaced by $\beta_{j}$ without violating the capacity constraints. We then choose the content with the minimum difference to replace with $\beta_{j}$.

\section{Simulation Results}\label{sec:Simulation}
In this Section, we compare our CRC algorithm to some of the existing caching schemes.
\subsection{Settings}\label{subsec:SimSettings}
We simulate the following caching schemes:

{\bf(1)} CRC: This scheme represents our basic algorithm.

{\bf(2)} CRC Version 2: This is similar to the CRC scheme, Version 1, except that we retrieve the content from the closest node that has the content in its cache, not necessarily along the path to the content's source.

{\bf(3)} All Cache: This scheme caches every new content arriving at a caching node, as long as there is enough residual capacity to cache the new content.

{\bf(4)} Random Caching Version 1: In this scheme, when a request for a content arrives at node $i$, the caching probability of the content depends on the content's popularity at node $i$. The popularity of a content $\beta_{j}$ at node $i$ denoted by $Pop_{j}$, is defined as the ratio of the number of requests for $\beta_{j}$ coming from the subnetwork connected to node $i$ denoted by $N_{i}^{j}$, to the total number of non-caching nodes in the subnetwork connected to node $i$ denoted by $N_{i}$. Mathematically speaking, $Pop_{j} = N_{i}^{j}/N_{i}$. If we choose a uniform random number $x$ between [0,1], and $x \leq Pop_{j}$, then the content $\beta_{j}$ is cached if there is enough room for it in the cache. Otherwise, the content is not cached.

{\bf(5)} Random Caching Version 2: This is similar to Random Caching Version 1, except that the caching probability of the content depends on the content's popularity at node $i$, scaled by the fraction of the available residual capacity to the total capacity in the cache of node $i$ denoted by $f_i$, \emph{i.e.}, if we choose a uniform random number $x$ between [0,1], and $x \leq f_i \times Pop_{j}$, then the content $\beta_{j}$ is cached if there is enough room for it in the cache. Otherwise, the content is not cached.

For every caching node $i$ in the network, we assign a cache capacity $D_{i}$ that is uniformly chosen in the range of [750,1000] GB. The number of the non-caching nodes connected to the caching node $i$ is chosen uniformly at random in the range of 10 to 90 nodes.

For every content, we randomly chose one of the nodes to act as the source. Each content has a size chosen randomly in the range of [100,150] MB. The starting effective time of the content is chosen randomly. The end time is also chosen randomly within a fixed interval from the starting time. If the end time exceeds the end time of the simulation, it is adjusted to be equal to the end time of the simulation. The simulation interval is chosen to be 1000 time slots.

\subsection{Results on Random topologies}
We start our evaluation on random backbone topologies, in which the caching nodes are generated as a random topology.

We simulate the effect of the number of caching nodes $n$ in the network for three cases, $n = 30$, $n = 50$, and $n = 100$ nodes. For each case we use 10 random topologies, and report the average performance. We fix the effective caching duration to 150 slots and the number of contents to 10000 contents to solely show the effect of increasing the number of nodes on the performance of the CRC algorithm. The results are shown in Figure \ref{fig:RA}(a). 

As can be seen from the figure, increasing the number of the caching nodes will result in better performance in all schemes since more contents can be cached. Another observation from the figure is that the performance of CRC schemes increases at a higher rate than other schemes as we increase the number of the nodes in the network. This shows that our scheme greatly benefits from adding more caching nodes to the network. It is also aligned with the property of asymptotic optimality of our scheme. On the other hand, not much improvement can be seen from the other schemes when the number of nodes is increased in the network.

We simulate the effect of changing the number of contents from 2000 to 10000. The results are averaged over 10 runs and are shown in Figure \ref{fig:RA}(b). The reason that the performance of the Cache All, Random 1, and Random 2 schemes increases and then decreases is that there is a saturation point after which the caches of the network cannot handle the requests. On the other hand, Our scheme reserves the cache capacity for contents with higher traffic savings, and achieves an improvement of 2 to 3-fold in terms of traffic savings.

Figure~\ref{fig:RA}(c) shows the effect of the maximum effective caching duration for three cases, 50, 100, and 150 time slots. In this scenario, the difference between the start and end times for each content is drawn randomly from $\{1, \ldots, \max. caching\ duration\}.$ The reason that the traffic savings decrease as the maximum effective caching duration increases after a certain point is that contents are cached for a longer period, so future contents are less likely to find enough residual capacity at the caching node.

In all of the results in Figure \ref{fig:RA}, the performance of CRC Version 2 is always less than the performance of CRC Version 1. This is because CRC Version 2 deviates from the settings under which we achieve optimality.

\begin{figure*}
\centering
\includegraphics[scale = 0.5]{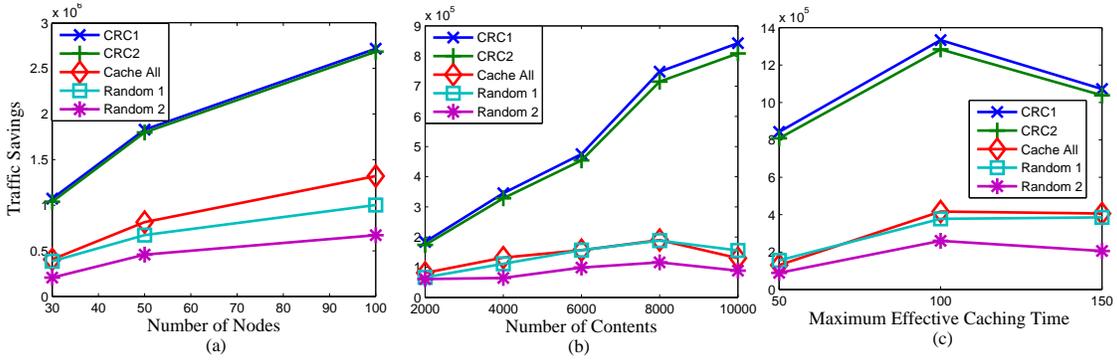}
\caption{The Effects of Different Factors on the Performance of the Random Topologies.}
\label{fig:RA}
\end{figure*}

So far our performance measure was the traffic saving. In Figure~\ref{fig:cost}, we measure the cost in terms of total number of hops to satisfy all of the requests. The results in Figure~\ref{fig:cost} are for a random topology with 100 caching nodes, the number of contents is 10000, and the maximum effective caching duration is 150 slots. The results in the figure show that even when we measure the performance in terms of the total cost, our scheme reduces the cost by the range of 30\% to 50\%.

\begin{figure}[t]
\centering
\includegraphics[scale = 0.3]{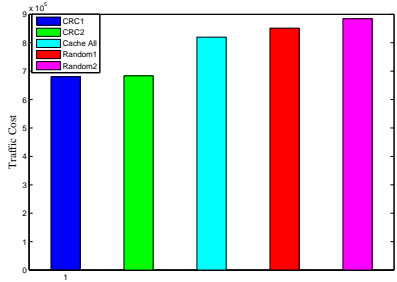}
\caption{Traffic cost.}
\label{fig:cost}
\end{figure}

In Figure~\ref{fig:CDF} we measure the per topology improvement for all schemes with respect to Random Caching Version 2 scheme. Here, we measure the performance of all schemes for 100 different random topologies. For each topology, we normalize the performance of all schemes with respect to the performance of Random Caching Version 2. Denote the performance of the CRC scheme and Random Caching Version 2 scheme for topology $s$ as $P_{CRC}(top_{s})$ and $P_{Random2}(top_{s})$, respectively. We compute the normalized performance of CRC scheme with respect to Random Caching Version 2 scheme for topology $s$ as $R_{CRC}(top_{s}) = P_{CRC}(top_{s})/P_{Random2}(top_{s})$. After that, the empirical CDF of the vector $R_{CRC} = [R_{CRC}(top_{1}), R_{CRC}(top_{2}), \ldots , R_{CRC}(top_{100})]$ for the 100 random topologies is plotted. We do the same process for the other two schemes. The results in the figure show that our scheme experiences about 4 times the improvements as that by Random Caching Version 2.

\begin{figure}
\centering
\includegraphics[scale = 0.4]{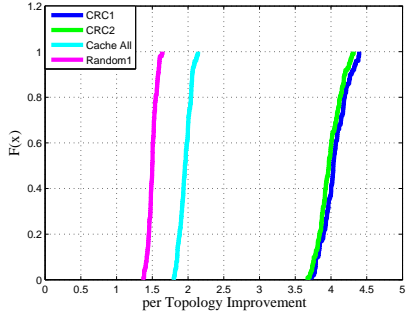}
\caption{The empirical CDF of the per topology improvement for random topologies with respect to Random Caching Version 2.}
\label{fig:CDF}
\end{figure}

\subsection{Results on a Small-word generated topology}
In \cite{bu2002distinguishing} it is shown that the Internet topology exhibits a small-world structure defined in \cite{watts1998collective}. In this Section we perform simulations based on the small world-structure.

Figure \ref{fig:SA} is similar to Figure \ref{fig:RA}, but for the small-world topologies. The results follow the same trend as the results for the random topologies except for two differnces. The first difference is that is that CRC Version 1 achieves better perofrmance than CRC Version 2 as we increase the number of nodes. The second difference is that all of the schemes performances increase with increasing the effective caching time. One of the reason is due to the sparsity of the small-world topologies, which results in the fact that the requests are distributed over multiple domains inside the topology.

\begin{figure*}
\centering
\includegraphics[scale = 0.5]{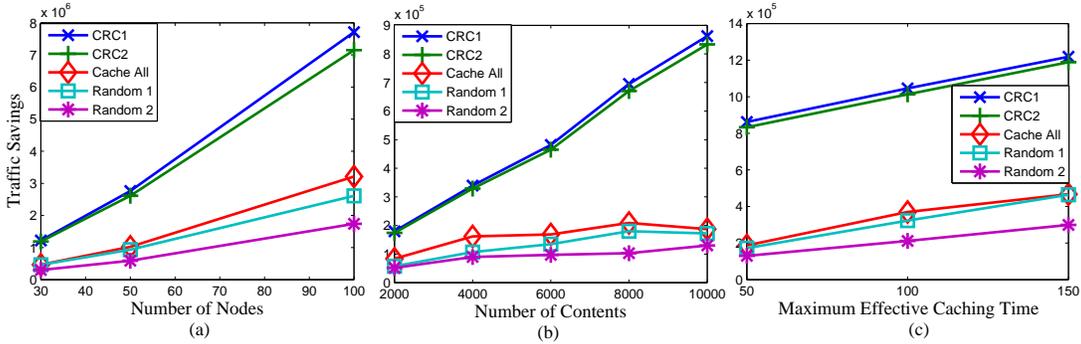}
\caption{The Effects of Different Factors on the Performance of the Small-world Topologies.}
\label{fig:SA}
\end{figure*}

\subsection{Results for Replacement-CRC}
We compare the performance of Replacement-CRC against the following schemes:

{\bf(1)} Least Recently Used (LRU): In this scheme, when a request for a content $\beta_{j}$ appears at node $i$, Least Recently Used replacement is performed at all nodes along the path from node $i$ to the source of the content $\beta_{j}$.

{\bf(2)} Random Replacement: In this scheme, when a request for a content $\beta_{j}$ appears at node $i$, every node along the path from node $i$ to the source of the content $\beta_{j}$ will randomly choose a cached content to be replaced with $\beta_{j}$, as long as the capacity constraints are satisfied.

{\bf(3)} CCN: This scheme represents the Content Centric Network as described in \cite{jacobson2009networking}, where a request for a content is broadcasted until the closest node with a copy of the content in its cache is found. The content then follows the path from the closest node to the requester, and all nodes along that path caches the content as long as the capacity constraints are satisfied, or performs replacement using LRU if content replacement is needed.

We use the same settings as described in Section \ref{subsec:SimSettings}, and we simulate the effect of increasing the number of caching nodes in the network, the effect of increasing the number of contents, and the effect of increasing the cache capacity of the caching nodes. The results are shown in Figure \ref{fig:RepAll}.

\begin{figure*}
\centering
\includegraphics[scale = 0.5]{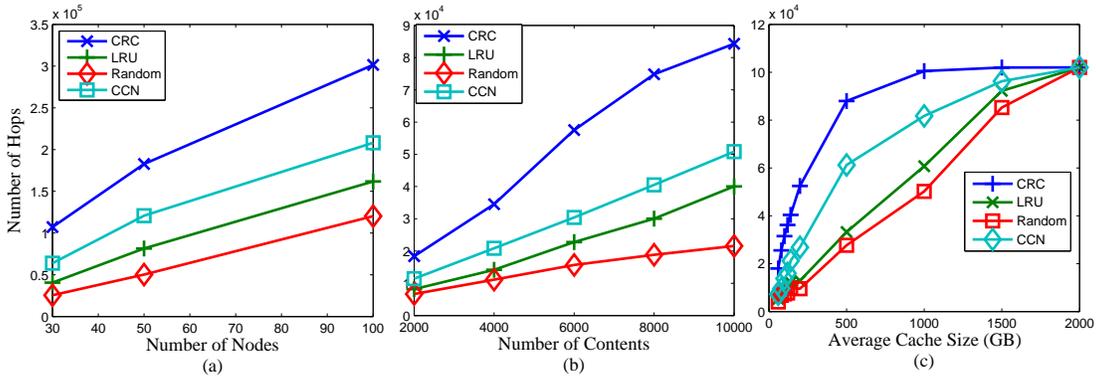}
\caption{The Effects of Different Factors on the Performance of Different Replacement Schemes.}
\label{fig:RepAll}
\end{figure*}

Figure \ref{fig:RepAll}(a) shows the performance of all schemes as we increase the number of the caching nodes in the network. From the figure, the performance of all schemes increases with increasing the number of caching nodes. This is because adding more caching nodes will increase the overall caching capacity of the network, which results in more cached contents. Moreover, as the topology grows with adding more nodes, the average distance between the nodes in the netwrok increases. The figure shows that Replacement-CRC outperforms the existing replacement schemes by 30\% to 60\%. 

Figure \ref{fig:RepAll}(b) shows the performance of all schemes as we increase the number of contents. As we increase the number of contents, the performance of all schemes increases since more contents are available for caching. Replacement-CRC acheives better perofrmance than the other schemes, since it is able to identify the contents with higher traffic savings and the replacement is done less frequently than the other schemes.

In Figure \ref{fig:RepAll}(c), we investigate the effect of increasing the caching size of the caching nodes on the performance of all schemes. We increased the caching size of each node until we reach a saturation point, where all of the nodes are able to cache all of the contents without the need for replacement. At this saturation point, all schemes acheives the same traffic savings. Another observation from the figure is that the performance of Replacement-CRC at 500GB is similar to the performance of the other schemes at 1500GB. This means that Replacement-CRC can achieve the same performance of the other schemes with only 30\% of the cache capacity.

\section{Conclusion}\label{sec:Conclusion}
Caching at intermediate nodes has the advantage of bringing the contents closer to the users, which results in traffic offloading from the origin servers and lower delays. To acheive this, caching schemes such as en-route caching and CCN have been investigated. Unlike CCN, the use of en-route caching does not require major changes to the TCP/IP model. Previous works have studied en-route caching under offline settings to acheive the optimal content placement strategy. In this work, we study the framework of en-route caching under online settings.

Under this framework, we characterize the fundamental limit for the ratio of the performance of the optimal offline scheme to that of any online scheme. The offline scheme has a complete knowledge of all of the future requests, while the online scheme does not possess such knowledge. We also design an efficient online scheme and prove that the developed online scheme achieves optimality as the number of nodes in the network becomes large. Moreover, we introduce an extension to the algorithm. Our simulation results affirm the efficiency of our scheme and its extension. Our future work includes the investigation of network coding \cite{ostovari2014network,khreishah2012distributed} under our settings.

\section*{Acknowledgement}
This research is supported by NSF grant ECCS-1331018.

\bibliographystyle{IEEEtran}
\bibliography{biblo}

% Generated by IEEEtran.bst, version: 1.13 (2008/09/30)
\begin{thebibliography}{10}
\providecommand{\url}[1]{#1}
\csname url@samestyle\endcsname
\providecommand{\newblock}{\relax}
\providecommand{\bibinfo}[2]{#2}
\providecommand{\BIBentrySTDinterwordspacing}{\spaceskip=0pt\relax}
\providecommand{\BIBentryALTinterwordstretchfactor}{4}
\providecommand{\BIBentryALTinterwordspacing}{\spaceskip=\fontdimen2\font plus
\BIBentryALTinterwordstretchfactor\fontdimen3\font minus
  \fontdimen4\font\relax}
\providecommand{\BIBforeignlanguage}[2]{{%
\expandafter\ifx\csname l@#1\endcsname\relax
\typeout{** WARNING: IEEEtran.bst: No hyphenation pattern has been}%
\typeout{** loaded for the language `#1'. Using the pattern for}%
\typeout{** the default language instead.}%
\else
\language=\csname l@#1\endcsname
\fi
#2}}
\providecommand{\BIBdecl}{\relax}
\BIBdecl

\bibitem{cisco2012forecast}
V.~Cisco, ``Cisco {V}isual {N}etworking {I}ndex: Global {M}obile {D}ata
  {T}raffic {F}orecast {U}pdate, 2013--2018,'' \emph{Cisco Public Information},
  2014.

\bibitem{vakali2003content}
A.~Vakali and G.~Pallis, ``Content {D}elivery {N}etworks: Status and
  {T}rends,'' \emph{Internet Computing, IEEE}, vol.~7, no.~6, pp. 68--74, 2003.

\bibitem{pathan2007taxonomy}
A.-M.~K. Pathan and R.~Buyya, ``A {T}axonomy and {S}urvey of {C}ontent
  {D}elivery {N}etworks,'' \emph{Grid Computing and Distributed Systems
  Laboratory, University of Melbourne, Technical Report}, 2007.

\bibitem{nygren2010akamai}
E.~Nygren, R.~K. Sitaraman, and J.~Sun, ``The {A}kamai {N}etwork: A {P}latform
  for {H}igh-{P}erformance {I}nternet {A}pplications,'' \emph{ACM SIGOPS
  Operating Systems Review}, vol.~44, no.~3, pp. 2--19, 2010.

\bibitem{jacobson2009networking}
V.~Jacobson, D.~K. Smetters, J.~D. Thornton, M.~F. Plass, N.~H. Briggs, and
  R.~L. Braynard, ``Networking {N}amed {C}ontent,'' in \emph{Proceedings of the
  5th international conference on Emerging networking experiments and
  technologies}.\hskip 1em plus 0.5em minus 0.4em\relax ACM, 2009, pp. 1--12.

\bibitem{psaras2011modelling}
I.~Psaras, R.~Clegg, R.~Landa, W.~Chai, and G.~Pavlou, ``Modelling and
  {E}valuation of {CCN}-caching {T}rees,'' \emph{NETWORKING 2011}, pp. 78--91,
  2011.

\bibitem{rossi2011caching}
D.~Rossi and G.~Rossini, ``Caching performance of content centric networks
  under multi-path routing (and more),'' \emph{Relat{\'o}rio t{\'e}cnico,
  Telecom ParisTech}, 2011.

\bibitem{guan2013push}
X.~Guan and B.-Y. Choi, ``Push or pull? toward optimal content delivery using
  cloud storage,'' \emph{Journal of Network and Computer Applications}, 2013.

\bibitem{bernardini2013mpc}
C.~Bernardini, T.~Silverston, and O.~Festor, ``{MPC}: Popularity-{B}ased
  {C}aching {S}trategy for {C}ontent {C}entric {N}etworks,'' in
  \emph{Communications (ICC), 2013 IEEE International Conference on}.\hskip 1em
  plus 0.5em minus 0.4em\relax IEEE, 2013, pp. 3619--3623.

\bibitem{psaras2012probabilistic}
I.~Psaras, W.~K. Chai, and G.~Pavlou, ``Probabilistic {I}n-{N}etwork {C}aching
  for {I}nformation-{C}entric {N}etworks,'' in \emph{Proceedings of the second
  edition of the ICN workshop on Information-centric networking}.\hskip 1em
  plus 0.5em minus 0.4em\relax ACM, 2012, pp. 55--60.

\bibitem{fiore2009cache}
M.~Fiore, F.~Mininni, C.~Casetti, and C.~Chiasserini, ``To {C}ache or {N}ot
  {T}o {C}ache?'' in \emph{INFOCOM 2009, IEEE}.\hskip 1em plus 0.5em minus
  0.4em\relax IEEE, 2009, pp. 235--243.

\bibitem{dai2012collaborative}
J.~Dai, Z.~Hu, B.~Li, J.~Liu, and B.~Li, ``Collaborative {H}ierarchical
  {C}aching with {D}ynamic {R}equest {R}outing for {M}assive {C}ontent
  {D}istribution,'' in \emph{INFOCOM, 2012 Proceedings IEEE}.\hskip 1em plus
  0.5em minus 0.4em\relax IEEE, 2012, pp. 2444--2452.

\bibitem{laoutaris2007cache}
N.~Laoutaris, G.~Zervas, A.~Bestavros, and G.~Kollios, ``The {C}ache
  {I}nference {P}roblem and its {A}pplication to {C}ontent and {R}equest
  {R}outing,'' in \emph{INFOCOM 2007. 26th IEEE International Conference on
  Computer Communications. IEEE}.\hskip 1em plus 0.5em minus 0.4em\relax IEEE,
  2007, pp. 848--856.

\bibitem{jiang2003optimal}
A.~Jiang and J.~Bruck, ``Optimal {C}ontent {P}lacement {F}or {E}n-{R}oute {W}eb
  {C}aching,'' in \emph{Network Computing and Applications, 2003. NCA 2003.
  Second IEEE International Symposium on}.\hskip 1em plus 0.5em minus
  0.4em\relax IEEE, 2003, pp. 9--16.

\bibitem{llorca2013dynamic}
J.~Llorca, A.~M. Tulino, K.~Guan, J.~Esteban, M.~Varvello, N.~Choi, and D.~C.
  Kilper, ``Dynamic {I}n-{N}etwork {C}aching for {E}nergy {E}fficient {C}ontent
  {D}elivery,'' in \emph{INFOCOM, 2013 Proceedings IEEE}.\hskip 1em plus 0.5em
  minus 0.4em\relax IEEE, 2013, pp. 245--249.

\bibitem{rosensweig2009breadcrumbs}
E.~J. Rosensweig and J.~Kurose, ``Breadcrumbs: efficient, best-effort content
  location in cache networks,'' in \emph{INFOCOM 2009, IEEE}.\hskip 1em plus
  0.5em minus 0.4em\relax IEEE, 2009, pp. 2631--2635.

\bibitem{rajahalme2008incentive}
J.~Rajahalme, M.~S{\"a}rel{\"a}, P.~Nikander, and S.~Tarkoma,
  ``Incentive-{C}ompatible {C}aching and {P}eering in {D}ata-{O}riented
  {N}etworks,'' in \emph{Proceedings of the 2008 ACM CoNEXT Conference}.\hskip
  1em plus 0.5em minus 0.4em\relax ACM, 2008, p.~62.

\bibitem{pham2013pricing}
T.-M. Pham, S.~Fdida, and P.~Antoniadis, ``Pricing in {I}nformation-{C}entric
  {N}etwork {I}nterconnection,'' in \emph{IFIP Networking Conference,
  2013}.\hskip 1em plus 0.5em minus 0.4em\relax IEEE, 2013, pp. 1--9.

\bibitem{albers2007energy}
S.~Albers and H.~Fujiwara, ``Energy-{E}fficient {A}lgorithms for {F}low {T}ime
  {M}inimization,'' \emph{ACM Transactions on Algorithms (TALG)}, vol.~3,
  no.~4, p.~49, 2007.

\bibitem{jaillet2008generalized}
P.~Jaillet and M.~R. Wagner, ``Generalized {O}nline {R}outing: New
  {C}ompetitive {R}atios, {R}esource {A}ugmentation, and {A}symptotic
  {A}nalyses,'' \emph{Operations research}, vol.~56, no.~3, pp. 745--757, 2008.

\bibitem{bu2002distinguishing}
T.~Bu and D.~Towsley, ``On {D}istinguishing between {I}nternet {P}ower {L}aw
  {T}opology {G}enerators,'' in \emph{INFOCOM 2002. Twenty-First Annual Joint
  Conference of the IEEE Computer and Communications Societies. Proceedings.
  IEEE}, vol.~2.\hskip 1em plus 0.5em minus 0.4em\relax IEEE, 2002, pp.
  638--647.

\bibitem{watts1998collective}
D.~J. Watts and S.~H. Strogatz, ``Collective dynamics of small-world
  networks,'' \emph{nature}, vol. 393, no. 6684, pp. 440--442, 1998.

\bibitem{ostovari2014network}
P.~Ostovari, J.~Wu, and A.~Khreishah, ``Network coding techniques for wireless
  and sensor networks,'' in \emph{The Art of Wireless Sensor Networks}.\hskip
  1em plus 0.5em minus 0.4em\relax Springer, 2014, pp. 129--162.

\bibitem{khreishah2012distributed}
A.~Khreishah, I.~Khalil, and J.~Wu, ``Distributed network coding-based
  opportunistic routing for multicast,'' in \emph{Proceedings of the thirteenth
  ACM international symposium on Mobile Ad Hoc Networking and Computing}.\hskip
  1em plus 0.5em minus 0.4em\relax ACM, 2012, pp. 115--124.

\end{thebibliography}

\end{document}